\pgfplotsset{compat=1.15}
\newtheoremstyle{break}
  {\topsep}{\topsep}%
  {\itshape}{}%
  {\bfseries}{}%
  {\newline}{}%
\theoremstyle{break}
\newtheorem{theorem}{Theorem}[section]
\newtheorem{corollary}{Corollary}[theorem]
\newtheorem{lemma}[theorem]{Lemma}
\newtheorem{conjecture}[theorem]{Conjecture}
\newtheorem{remark}[theorem]{Remark}
\newtheorem{definition}{Definition}[section]
\newcommand{\bs}[1]{\boldsymbol{#1}}
\definecolor{ududff}{rgb}{0.30196078431372547,0.30196078431372547,1}
\title{The Overlap Gap Property limits limit swapping in the QAOA}
\author[1,2]{\orcidlink{0009-0005-5505-2590}Mark Goh }
\affil[1]{Institute of Material Physics in Space, German Aerospace Center}
\affil[2]{Institute for Theoretical Physics, University of Cologne}
\date{\today}
\begin{document}

\maketitle

\begin{abstract}
The Quantum Approximate Optimization Algorithm (QAOA) is a quantum algorithm designed for Combinatorial Optimization Problem (COP). We show that if a local algorithm is limited in performance at logarithmic depth for a spin glass type COP with an underlying Erd\"os--R\'enyi hypergraph, then a random regular hypergraph is similarly limited in performance as well. As such, we re-derived the fact that the average-case value obtained by the QAOA for even $q\ge 4$, Max-$q$-XORSAT is bounded away from optimality when optimised using asymptotic analysis due to the Overlap Gap Property (OGP). While this result was proven before, the proof is rather technical compared to ours. In addition, we show that the earlier result implicitly also implies limitation at logarithmic depth $p \le \epsilon \log n$ providing an improvement over limitation at constant depth. Furthermore, the extension to logarithmic depth leads to a tightening of the upper bound that the QAOA outputs at logarithmic depth for MaxCUT and Max-$q$-XORSAT problems.
We also provide some numerical evidence the limitation should be extended to odd $q$ by showing that the OGP exists for the Max-$3$-XORSAT on random regular graphs.
\end{abstract}

\newpage 

\tableofcontents

\section{Introduction}
Combinatorial Optimization Problems (COPs) are notoriously difficult even as a decision problem \cite{Karp1972} --- well known examples include the travelling salesman problem \cite{steele1997probability} and finding the ground state of a spin glass Hamiltonian \cite{Gamarnik_2022}.  Rather than attempting to find an exact solution, one is often interested in approximate solutions. One such algorithm is the Quantum Approximate Optimization Algorithm (QAOA) introduced by Farhi et al.\ \cite{QAOA}, a type of variational quantum algorithm that, given $p$ layers, uses $2p$ optimisation parameters.\\

Attempting to evaluate the expectation value of the QAOA is incredibly difficult. Naively, given a problem with size $n$, each parameter in the QAOA requires a sum over $2^n$ terms. In a series of works starting with \cite{SKQAOAFarhi2022quantumapproximate}, algorithms to evaluate the expectation value of the QAOA on $q$-spin glass models with time complexity independent of $n$ have been found with increasing performance. The best known one for evaluating $q$-spin glass is found in \cite{QAOAspinglass} with a time complexity of $\mathcal{O}(p^2 4^p)$ using algebraic techniques.\\

Another line of research with respect to the QAOA is to prove its limitation in performance at shallow depths via the Overlap Gap Property (OGP). One of the first applications to show the limitation of performance uses locality properties of the QAOA. Thus, at shallow depth, the QAOA does not explore the whole graph underlying a COP and is unable to output a solution that beats the OGP barrier for the Maximum Independent Set problem \cite{QAOA_seegraph}. The limitation of the QAOA as a result of the OGP has been predominately limited on sparse graphs but a breakthrough came in \cite{Basso_2022} using a dense-from-sparse relation between complete graphs and sparse graphs to show that the QAOA is also limited in performance even if it sees the whole graph. Furthermore, they showed that for any constant $p$ in the asymptotic analysis, the QAOA is unable to surpass the OGP barrier. Assuming  a stronger version of the OGP, a similar and slightly better result shows limitation at super constant depth, $p\sim \mathcal{O}(\log \log n)$, for dense graphs \cite{Anshu2023concentrationbounds}.\\

In this paper, we re-derive a result of \cite{Basso_2022} that for the Max-$q$-XORSAT problem, and equivalently the mean-field $q$-spin glass, the QAOA is unable to find the optimal value even if $p$ goes to infinity for even $q\ge 4$ if we swap the order of limits, the thermodynamic limit and the run time of the algorithm (i.e.\ taking $\lim_{p\rightarrow \infty}\lim_{n\rightarrow \infty}$ rather than $\lim_{n\rightarrow \infty}\lim_{p\rightarrow \infty}$). More precisely, the analysis done by various authors is to study the performance of the QAOA asymptotically by taking the problem size $n$ to infinity and studying the output of the QAOA at constant depths $p$ results in the underlying graph explored by the QAOA to appear as a tree. The parameters that optimises the QAOA's performance in this instance are known as tree-parameters that work well on any Hamiltonian instance \cite{wybo2024missingpuzzlepiecesperformance,fixed-angle}. While this is a re-derivation of a known proof, their proof is rather technical compared to ours. Furthermore, our theorem provides some hints that their proof can be extended beyond constant $p$ and hold when $p \le \epsilon \log n$ for dense graphs since their method  is implicitly valid at logarithmic depth for sparse graphs.\\

The paper is organised as follows: In \cref{sec:Background} we give a brief background to random graphs, spin glass problems, the OGP, and the QAOA; in \cref{sec:results}, we summarise what is known in literature about the results of the QAOA on spin glasses and their equivalence between the mean-field and dilute spin model; in \cref{sec:Conjecture} we formalise the point about the OGP in random regular hypergraphs that was mentioned in \cite{QAOAspinglass} and re-derive the result in \cite{Basso_2022}, that the QAOA cannot find the optimal value for a spin glass COP even if the algorithm runs indefinitely under limit swapping. Following that, we outline a proof to affirm the theorem while providing some numerical evidence that the proof should extend to odd $q$-spin glass as well.

\subsection{Statement of result}

The main result of this work is to show that the OGP exists as a limitation for Max-$q$-XORSAT on a random regular hypergraph with sufficiently large degree. This is done via the following theorem

\begin{theorem}
    If the Overlap Gap Property limits the performance of a local algorithm on the Erd\"os--R\'enyi hypergraph at logarithmic depth $p$, it also limits the performance on a random regular hypergraph at logarithmic depth $p$.
\end{theorem}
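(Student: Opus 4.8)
The plan is to transfer the obstruction from the Erdős–Rényi (ER) hypergraph to the random regular hypergraph by a coupling/contiguity argument, exploiting the fact that a local algorithm at depth $p$ only ever queries a radius-$p$ neighbourhood of each vertex. First I would make precise the notion of a "local algorithm at depth $p$": its output at a vertex (or on an edge of the factor graph) is a measurable function of the isomorphism type of the $p$-ball around that vertex, together with whatever i.i.d.\ randomness is seeded on the vertices. For QAOA this is exactly the structure extracted in \cite{SKQAOAFarhi2022quantumapproximate,QAOAspinglass}: the expected energy density is a sum of local contributions, each depending only on a bounded-radius subgraph. The key structural input is therefore that the expected objective value, and the relevant overlap functionals entering the OGP, are continuous functionals of the local weak limit of the hypergraph.

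The second step is the heart of the argument: both the ER hypergraph $G(n,q,d/\binom{n-1}{q-1})$ (or whatever sparsity normalisation gives expected degree $d$) and the random $d$-regular $q$-uniform hypergraph converge, in the local weak (Benjamini–Schramm) sense, to closely related limiting objects — the ER model to a Poisson($d$) Galton–Watson hypertree, and the regular model to the deterministic $(d,q)$-regular hypertree. I would then argue that for the performance of the local algorithm these limits are interchangeable up to an error that vanishes as $d\to\infty$: concretely, one can couple the two neighbourhood distributions so that they agree on the radius-$p$ ball with probability $1-o_d(1)$ (Poisson degrees concentrate around $d$ after suitable conditioning), and since $p\le \epsilon\log n$ the number of vertices touched is $n^{o(1)}$, so a union bound over the local contributions still leaves the total discrepancy negligible. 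Hence the value produced by the local algorithm on the random regular hypergraph is, to leading order, the same as on the ER hypergraph.

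The third step handles the OGP side symmetrically. The OGP is a statement about the geometry of the near-optimal solution space — the absence of solution pairs at intermediate overlap — and this is itself established via a first/second-moment computation whose inputs are the same local densities. If the OGP holds for the ER ensemble (the hypothesis of the theorem), then because the regular ensemble is contiguous to, or coupleable with, the ER ensemble on the scale relevant to these moment computations, the OGP transfers: any interpolation-path or overlap-concentration argument that rules out a local algorithm matching optimality on ER applies verbatim, with the quantitative thresholds shifted by $o_d(1)$. Combining the two transfers, if the OGP limits the local algorithm's performance on ER at depth $p$, the same gap between the local algorithm's value and the true optimum persists on the random regular hypergraph at depth $p$, since both the algorithmic value and the optimal value are preserved by the coupling.

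The main obstacle I anticipate is controlling the errors uniformly in $p$ up to $p\le\epsilon\log n$ rather than merely for constant $p$: local weak convergence alone only gives the constant-$p$ statement, so one needs the additional quantitative fact that, on a random regular (or ER) hypergraph on $n$ vertices, radius-$(\epsilon\log n)$ balls are still trees (or tree-like) with high probability — this is where the small-enough $\epsilon$ is spent, since the expected number of short cycles through a typical vertex is $n^{O(\epsilon)}\cdot o(1)$-controlled only when $\epsilon$ is below an explicit constant depending on $d$ and $q$. A secondary technical point is ensuring that the OGP moment computation, which a priori concerns the global solution space rather than a local neighbourhood, is genuinely robust under the coupling; here I would lean on the fact that the relevant partition-function ratios are themselves expressible through the local limit, so that the same $o_d(1)$ control suffices. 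Once these two quantitative ingredients are in place, the transfer is essentially bookkeeping.
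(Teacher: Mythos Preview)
Your Step~2 strategy --- couple the local neighbourhoods of the Erd\H{o}s--R\'enyi and the regular hypergraph so that a depth-$p$ local algorithm produces (nearly) the same output on both --- is the right idea and is essentially what the paper does, but the paper executes it concretely rather than through abstract Benjamini--Schramm convergence. Specifically, the paper builds an explicit edge-modification procedure: trim every vertex of the ER graph down to degree $\lambda' = \lceil \lambda + \sqrt{\lambda}\log\lambda\rceil$, then add edges to raise every deficient vertex up to $\lambda'$. Moment bounds show this touches only an $\mathcal{O}_\lambda(\lambda^{-c\log\lambda})$ fraction of edges, and separate lemmas verify that $(\epsilon\log n)$-balls remain trees with high probability both before and after the surgery. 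This gives a \emph{single coupled instance} that is simultaneously an ER graph and a $\lambda'$-regular graph up to a negligible edge set, which is stronger and more direct than invoking local weak limits (whose limiting objects --- the Poisson Galton--Watson tree versus the deterministic regular tree --- are genuinely different at fixed $d$, so your coupling really lives in the large-$d$ regime, as you note). The contradiction then reads: if the algorithm achieved near-optimality on the regular instance, the same bit-string would be near-optimal on the coupled ER instance (the cost functions differ on $o_\lambda(1)$ edges and the optimal values coincide by Sen's result, equation~(\ref{eq:asym_xorsat}) in the paper), contradicting the ER limitation.

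Your Step~3 is where the proposal goes astray. You attempt to transfer the OGP itself to the regular ensemble via contiguity and a claim that the relevant moment computations are local. This is precisely what the paper \emph{cannot} do and explicitly flags as open: see the paragraph after the proof, where the authors note that ``there is no proof that the OGP exists for the Max-$q$-XORSAT on regular graph'' and state the transfer of OGP as Conjecture~4.5. The moment computations underpinning the coupled-OGP in \cite{chou2022limitations} rely on the Poisson structure of the ER model in a way that does not obviously survive conditioning to exact regularity. Fortunately, Step~3 is unnecessary: once you have (i) the algorithm's value agrees on the two ensembles (your Step~2) and (ii) the optimal values agree (Sen), the gap $\eta_{OGP} < \eta_{OPT}$ transfers immediately by arithmetic --- no OGP on the regular side is ever invoked. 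Drop Step~3 and your argument becomes both correct and close to the paper's, modulo the concrete-versus-abstract coupling distinction above.
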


The proof is done via contradiction. First, we show that we can trim the Erd\"os--R\'enyi hypergraph of average degree $\lambda$ to a regular hypertree removing at most $\mathcal{O}(1/\lambda^{\log \lambda})$ fraction of edges. Then, one can form a $\lambda$-regular hypergraph from the hypertree. Assuming an algorithm, that is limited by the OGP, is able to find a near optimal solution on the regular hypergraph leads to a contradiction since such an algorithm is unable to find near optimal solution on Erd\"os--R\'enyi hypergraph. Thus, the OGP also acts as a barrier to optimization for random regular graph.\\

This limitation at logarithmic depth local algorithms leads us to re-derive a recently discovered theorem and improve upon its results:
 
\begin{theorem}[Informal, theorem 3 and 4 of \cite{Basso_2022}]
    When $q\ge 4$ and is even, the  performance of the QAOA is limited at logarithmic depth for the Max-$q$-XORSAT with an underlying $D$-regular $q$-uniform hypergraphs. The performance is strictly upper-bounded by the OGP since
    \begin{align}
         \lim_{n\rightarrow\infty}
    \lim_{p\rightarrow \epsilon \log n}
    \frac{1}{|E|}\expval{ H_{XOR}^q}{\boldsymbol{\gamma},\boldsymbol{\beta}} =  \frac{1}{2}+
    \nu_{\infty}^{[q]}(D,\boldsymbol{\gamma},\boldsymbol{\beta}),
    \end{align}
    with $\nu_{\infty}^{[q]}(D,\boldsymbol{\gamma},\bs{\beta})<\nu_{\infty}^{[q]}(\boldsymbol{\gamma},\bs{\beta})<\Pi_q$.
\end{theorem}

There are several immediate corollaries of this result for the QAOA. The first of which was noted in \cite{QAOAspinglass} as a side-note.
\begin{corollary}
    Optimising the QAOA using the algorithm in \cite{QAOAspinglass,Basso_2022} only allows it to perform equal in performance to local classical algorithms thus providing no quantum advantage.
\end{corollary}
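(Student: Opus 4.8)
The plan is to observe that running QAOA with the fixed, instance-independent tree parameters $(\bs{\gamma},\bs{\beta})$ of \cite{QAOAspinglass,Basso_2022} is, as far as the combinatorial objective is concerned, just a (randomized) local algorithm, and then to read off the conclusion from the matching classical bounds.

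First I would recall the notion of a local algorithm from \cref{sec:Background}: a rule assigning to each vertex a spin value that depends only on its radius-$\mathcal{O}(p)$ neighborhood in the (hyper)graph. The expectation of any clause in the QAOA state $\ket{\bs{\gamma},\bs{\beta}}$ involves only the phase and mixing operators within distance $p$ of that clause's support, and once the parameters are frozen to the tree values this expectation is a fixed function of the isomorphism type of that local neighborhood. Consequently the energy density produced by QAOA coincides, up to $o(1)$ concentration corrections, with the objective value of the classical local rule obtained by measuring $\ket{\bs{\gamma},\bs{\beta}}$ in the computational basis --- so QAOA at depth $p$ is dominated by the best classical local algorithm at depth $p$ on the same instance.

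Next, combining the informal restatement of \cite{Basso_2022} above with the main theorem of this paper, the OGP caps this value a constant below the optimum $E^{\star}$ for even $q\ge 4$ and for all $p\le \epsilon\log n$; but that is exactly the cap constraining every classical local algorithm, so QAOA gains nothing --- this already yields the ``no quantum advantage'' half of the statement. To promote ``cannot beat'' to ``equal in performance'' I would align the explicit $\mathcal{O}(p^{2}4^{p})$ QAOA energy formula of \cite{QAOAspinglass} with the value reached by the natural incremental / approximate-message-passing local algorithm on the dilute model, and, through the dense-from-sparse correspondence used in \cref{sec:results}, on the mean-field model: in the $p\to\infty$ limit taken after $n\to\infty$ both should be controlled by the same Parisi-type variational principle restricted to tree (overlap-concentrated) rules.

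The step I expect to be the genuine obstacle is this last matching --- proving the two limiting values are equal rather than merely both lying below $E^{\star}$. For $q=2$ the classical side is known to attain the Parisi value, so the ``equal in performance'' clause reduces precisely to the limit-swapping conjecture recalled in the abstract; I would therefore state that clause as conditional on that conjecture, while the weaker ``no quantum advantage'' conclusion stands unconditionally from the locality argument alone. I would close by noting that the numerics of \cref{sec:Conjecture} are consistent with this matching extending to odd $q$ as well.
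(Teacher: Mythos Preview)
Your first two paragraphs are essentially the whole argument the paper intends: QAOA at depth $p$ with the frozen tree parameters is a $p$-local algorithm in the sense of \cref{sec:Background}, and by the main theorem (together with \cref{thm:algo_CLS}) any such algorithm is capped at $\eta_{OGP}<\eta_{OPT}$ on the regular hypergraph for even $q\ge 4$. The paper treats the corollary as an immediate consequence of this and nothing more; it was already noted as a side remark in \cite{QAOAspinglass}.

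Where you diverge is in reading ``equal in performance'' as a quantitative matching statement --- that the QAOA energy density and the best classical local/AMP value coincide numerically --- and then trying to prove that matching via the Parisi variational principle. The paper does not claim or need this. The phrase is meant in the weaker sense that QAOA with tree parameters sits in the same algorithmic class (local algorithms) and is therefore subject to the \emph{same} OGP barrier $\eta_{OGP}$ as the classical local algorithms; no separate lower-bound matching is required for the ``no quantum advantage'' conclusion. The observation in the Discussion that the QAOA value under limit swapping coincides with the AMS value is offered as suggestive evidence for the SK conjecture, not as part of the corollary's justification. So your last two paragraphs --- the conditional clause, the appeal to the $q=2$ conjecture, and the numerics --- are addressing a stronger statement than the one actually asserted, and can be dropped.
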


The above corollary is a result of optimising the QAOA under limit swapping of the algorithm run time $p$ and the problem size $n$. This therefore results in the following corollary:

\begin{corollary}
    If a COP exhibits the OGP, then optimising QAOA via limit swapping (i.e.\ using the tree parameters) results in sub-optimal performance.
\end{corollary}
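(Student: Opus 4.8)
The plan is to argue that the limit-swapped QAOA is, at every finite depth, a \emph{local} algorithm, and then to feed it into the very same OGP obstruction that drives the theorems above and the preceding corollaries. First I would fix notation: write $\nu_{p,n}(\bs\gamma,\bs\beta)$ for the expected normalised QAOA value at depth $p$ on an $n$-vertex instance of the COP, and distinguish the two orders of limits. Running the algorithm ``honestly'' corresponds to fixing $n$, optimising $(\bs\gamma,\bs\beta)$ and letting $p\to\infty$ before $n\to\infty$; ``limit swapping'' is the quantity $\lim_{p\to\infty}\lim_{n\to\infty}\sup_{\bs\gamma,\bs\beta}\nu_{p,n}(\bs\gamma,\bs\beta)$, in which the inner thermodynamic limit is taken first. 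The effect of that inner limit is that the maximiser collapses to the instance-oblivious tree parameters of \cite{QAOAspinglass}: the optimisation loses access to the global structure of the graph.

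The key step is to observe that QAOA at a fixed depth $p$ with any fixed parameters is a local algorithm in the sense required by the OGP machinery. The QAOA unitary is a product of $2p$ factors, each generated by either the (diagonal) cost Hamiltonian or the (single-qubit) mixer, so it has a light cone of radius $p$: the reduced density matrix on any vertex --- and hence the marginal law of the measured bitstring on any bounded region --- depends only on the hyperedges within graph-distance $p$ of that region. Consequently the output distribution is an $\mathcal{O}(p)$-local function of the instance, which is exactly the stability property exploited in the obstruction arguments behind the theorems above (and, for Erd\"os--R\'enyi versus regular hypergraphs, in the first theorem).

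From here the conclusion follows by the same contradiction used earlier in the paper. By hypothesis the COP exhibits the OGP, so there is a constant $\delta>0$ --- determined by the overlap window and the gap value, \emph{not} by the locality radius --- such that with high probability as $n\to\infty$ no local algorithm attains value exceeding $\nu_\ast-\delta$, where $\nu_\ast$ is the limiting optimum; applying this at each fixed $p$ and using the first theorem to move between the Erd\"os--R\'enyi and regular ensembles gives $\lim_{n\to\infty}\sup_{\bs\gamma,\bs\beta}\nu_{p,n}(\bs\gamma,\bs\beta)\le \nu_\ast-\delta$. Since this bound is uniform in $p$, it survives the outer limit $p\to\infty$, so the limit-swapped value is at most $\nu_\ast-\delta<\nu_\ast$, i.e.\ sub-optimal.

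The main obstacle is the uniformity of $\delta$ in $p$, together with checking that the light-cone picture genuinely places \emph{measured} QAOA --- a distribution over assignments rather than a single assignment --- inside the class of algorithms the OGP rules out. One has to run the usual ensemble/interpolation argument on a pair of coupled instances and verify that two QAOA samples cannot simultaneously be near-optimal and land outside the forbidden overlap band, so that the superposition does not ``bridge'' the gap; the radius-$p$ light cone is what makes the two samples only weakly coupled and forces the contradiction. A secondary point worth flagging is that the argument really does need the swapped order of limits: if $p$ were allowed to grow with $n$ faster than $\epsilon\log n$, the light cones would become comparable to the whole graph and the locality hypothesis --- and with it the obstruction --- would break, which is precisely why the honest order of limits is not settled here.
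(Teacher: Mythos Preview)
Your proposal is correct and matches the paper's reasoning: the paper does not give a standalone proof of this corollary but treats it as an immediate consequence of the preceding chain --- QAOA at fixed $p$ is $p$-local (light cone), the OGP-based obstruction (\cref{thm:algo_CLS} together with \cref{thm:algo_limit}) caps every local algorithm below $\eta_{OGP}<\eta_{OPT}$ uniformly in $p$, and this uniform bound survives sending $p\to\infty$ after $n\to\infty$. Your write-up is a careful unpacking of exactly that logic, including the correct caveat that the argument breaks once $p$ exceeds $\epsilon\log n$.
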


\section{Background}
\label{sec:Background}

\subsection{Random Graphs}

Here we standardize the notation we use to denote a hypergraph. A hypergraph $G=(V,E)$ has $|V|=n$ vertices, and $|E|=m$ edges. A graph is $q$-uniform if every hyperedge connects to exactly $q$ vertices. A graph is $d$-regular if every vertex has degree $d$. Conventionally, an instance of the Erd\"os--R\'enyi--(Gilbert) $q$-uniform hypergraph $G=\mathbb{G}^q_{ER}(n,p)$ is a random graph with $n$ vertices where each hyperedge is added with probability $p$. The original Erd\"os--R\'enyi $q$-uniform hypergraph $G=\mathbb{G}^q_{ER}(n,m)$ is chosen randomly from the set of hypergraph with $n$ vertices and $m$ hyperedges. The former is now more frequently used. The two types of Erd\"os--R\'enyi graphs are similar to each other when $np=m$. In fact, it has been shown that the two types of random graphs are asymptotically equivalent under certain conditions \cite{luczak1990equivalence}.\\

Another type of random hypergraph of interest is the $d$-regular $q$-uniform hypergraph $G=\mathbb{R}^q(n,d)$ where we implicitly assume that $nd=qx$ for some integer $x$. Unlike Erd\"os--R\'enyi graphs that can be generated randomly, there is no easy unbiased way to generate such graphs, though one such method is known as the configuration model introduced by Bollob\'as \cite{BOLLOBAS1980311}.

\subsection{Spin glass}
\label{subsec:spinglass}
In this paper, we focus on the mean field spin glass model and the related dilute spin glass model. For the mean field model, the main goal, roughly speaking, is to find the ground state energy of a spin glass Hamiltonian. A widely studied model is the Sherrington--Kirkpatrick (SK) model \cite{SK_Model}. More generally, an Ising mean field $q$-spin model is given by the following
\begin{align}
   H_q(\bs{z})= \sqrt{\frac{q!}{2n^{(q-1)}}}\sum_{j<k<\dots <q} J_{jk\dots q} z_j z_k\dots z_q,
\end{align}
where the couplings are randomly chosen over a normal distribution and $z_i\in\{-1,1\}$.\\
    
\begin{align}
    \Pi_q = \lim_{n\rightarrow \infty} \max_{\bs{z}\in \Sigma_{n} }
    \frac{H_q(\bs{z})}{n}.
\end{align}

The dilute spin glass model is also known as the XOR-satisfiability (XORSAT) problem. Specifically, given a $q$-uniform hypergraph $G=(V,E)$ where $E\subset V^q$, and a given signed weight $J_{i_1,\dots,i_q}\in \{-1,+1\}$, Max-$q$-XORSAT is the problem of maximising the following cost function
\begin{align}
     H_{XOR}^q(\bs{z})= \sum_{(i_1,\dots,i_q)\in E} \frac{1}{2} (1+ J_{i_1 i_2 \dots i_q}z_{i_1} z_{i_2}\dots z_{i_q}).
\end{align}
In terms of the hypergraph, $z_{i_j}$ refers to the vertices and $J_{i_1 i_2 \dots i_q}$ refers to the hyperedge.\\

We say that an instance of the problem is satisfied if there is an assignment of values to the bit-string $\bs{z}$ which satisfies all the clauses (i.e.\ $ H_{XOR}^q(\bs{z})=|E|$). Otherwise, we say it is unsatisfiable. Suppose we fix $d>q$ sufficiently large so that we are in the unsatisfiable regime. The maximum number of satisfiable equations in an instance of random XORSAT for both $\mathbb{G}^q_{ER}(n,p)$, and $\mathbb{R}^q(n,d)$ has been found \cite{sen2017optimization} to be
\begin{align}
    \label{eq:asym_xorsat}
    \frac{1}{|E|}\max_{\bs{z}} H_{XOR}^q(\bs{z})
    = \frac{1}{2}+ \Pi_q \sqrt{\frac{q}{2d}} + \mathcal{O}(1/\sqrt{d})
\end{align}

\subsection{Overlap Gap Property}

One major obstacle to finding optimal solutions for COPs is known as the Overlap Gap Property (OGP). The term was introduced in \cite{OGPfirst}, though the concept was already used by various authors \cite{OGPsource,OGP_Source2}.\\

For the definition of the OGP, one can informally think that for certain choices of disorder $J$, there is a gap in the set of possible pairwise overlaps of near-optimal solution. Informally, for every two near optimal solution $\bs{z}^1,\bs{z}^2$, it is the case that the distance between them is either extremely small, or extremely large. Formally, we define the OGP for a single instance as the following:
\begin{definition}[Overlap Gap Property \cite{OGP}]

   For a general maximization problem with random input $J$, the OGP holds if there exists some $\epsilon >0,$ with $0\leq\mu_1 < \mu_2$ such that for every $\bs{z}^1,\bs{z}^2$ that is an $\epsilon$-optimal solution
    \begin{align}
        H_J(\bs{z}^i) \ge \max_{\bs{z}\in 2^n } H_J(\bs{z}) - \epsilon,
    \end{align}
    it holds that the (normalised) overlap between them is either less than $\mu_1$ or greater than $\mu_2$
    \begin{align}
    |R_{1,2}|\in [0,\mu_1] \cup [\mu_2,1],
    \end{align}
    where
    \begin{align}
        R_{1,2} = \frac{1}{n} \sum_{i=1}^n z_i^1 z_i^2.
    \end{align}
\end{definition}
Rather than using the overlap, it is often easier to visualize sets of solutions as clusters using the Hamming distance between two states via the relation
\begin{equation}
    \frac{H_{1,2}}{n} = \frac{1}{2}|1-R_{1,2}| \in [0,a] \cup [b,1]
\end{equation}
where $H_{1,2}$ is the Hamming distance between states $\bs{z}_1$ and $\bs{z}_2$ and $0<a<b<1$.\\

The first interval is trivial as we can simply choose the overlap $\bs{z}^1$ with itself. It is the existence of the second overlap, or rather the non-existence of overlap in the interval $(a, b)$, that is difficult to prove.\\

A general version of it is known as the ensemble-OGP introduced in \cite{Chen_2019} or coupled-OGP as used in \cite{chou2022limitations}. This version is required to prove limitations of local algorithm for technical reasons and requires an interpolation scheme between two different instances of Erd\"os--R\'eyni graphs. For spin glass, a branching OGP has been developed that makes use of the ultrametric structure in the Parisi solution \cite{huang2023algorithmic}.\\

Informally speaking, the OGP limits the performance of algorithms by first showing that the set of near optimal solutions exhibits a strong clustering property. That is, with high probability, a gap exists in their overlaps or equivalently, a gap in the hamming distance between near optimal solutions. Then, one proceeds to show that if an algorithm is able to find arbitrary near optimal solution, the algorithm outputs a solution that is in the region forbidden by the OGP. In the context of the QAOA, one typically uses the concentration of measure to prove the limitation of the QAOA at shallow depth \cite{Basso_2022,chou2022limitations}. We refer the reader to the review papers by Garmanik \cite{Gamarnik_2022,OGP} for details on how the OGP limits the performance of algorithms.

\subsection{QAOA}
The QAOA is a local quantum algorithm \cite{chou2022limitations} designed to find approximate solutions to combinatorial optimization problems \cite{QAOA}. The goal is to find a bit string $\bs{z}\in \{-1,+1\}^n$ that maximizes the cost function $C(\bs{z})$. Given a classical cost function $C$, we can define a corresponding quantum operator $\hat{C}$ that is diagonal in the computational basis, $\hat{C}\ket{\bs{z}}=C(\bs{z})\ket{\bs{z}}$. In addition, define the operator $\hat{B}=\sum_{j}^n \hat{X}_j$ where $\hat{X}_j$ is the Pauli $X$ operator acting on qubit $j$. Given a set of parameters $\bs{\gamma}=(\gamma_1,\dots,\gamma_p)\in \mathbb{R}^p$ and $\bs{\beta}=(\beta_1,\dots,\beta_p)\in \mathbb{R}^p$, the QAOA prepares the initial state as $\ket{s}=\ket{+}^{n}=2^{-n/2}\sum_{\bs{z}}\ket{\bs{z}}$ and applies $p$ layers of alternating unitary operators $e^{-i\gamma_k \hat{C}}$ and $e^{-i\beta_k \hat{B}}$ to prepare the state
\begin{align}
    \ket{\bs{\gamma,\beta}}=e^{-i\beta_p \hat{B}}e^{-i\gamma_p \hat{C}} \dots
    e^{-i\beta_1 \hat{B}}e^{-i\gamma_1 \hat{C}} \ket{s}.
\end{align}
For a given cost function $C$, the corresponding QAOA objective function is the expectation value $\expval{\hat{C}}{\bs{\gamma,\beta}}$. The goal in the QAOA at depth $p$ is to find the $2p$ optimal parameters $(\bs{\gamma}^*,\bs{\beta}^*)$ that maximises the expectation value $\expval{C}{\bs{\gamma,\beta}}$ (i.e.\ $\max_{\bs{\gamma},\beta} \expval{C}{\bs{\gamma,\beta}}$).
Heuristics strategies to optimize $\expval{C}{\bs{\gamma,\beta}}$ with respect to $(\bs{\gamma,\beta})$ using a good initial guess have been proposed in \cite{heuristics_QAOA}. Recently, when applying the QAOA to large girth regular graphs with girth greater than $2p+1$, the graph appears as a regular tree and the authors of \cite{QAOAspinglass} developed an algorithm to find the optimal parameters which we call tree-parameters.  

\section{Summary of known theorems}
\label{sec:results}

\subsection{QAOA and COPs}
The first result of the QAOA on spin glass can be found in \cite{SKQAOAFarhi2022quantumapproximate} where the authors applied the QAOA on the Sherrington--Kirkpatrick (SK) model and found an algorithm to evaluate the expectation value in the infinite limit after averaging over the disorder $\mathbb{E}_J$. More generally, for a $q$-spin glass with cost function
\begin{align}
    H_q(\bs{z}) = \sum_{j<k<\dots <q} J_{jk\dots q} z_j z_k\dots z_q,
\end{align}
it was shown in \cite{Basso_2022} that the following theorem holds

\begin{theorem}[Theorem 1 of \cite{Basso_2022}]
    \label{thm:Vqspinglass}
    For any $p$ and any parameters $(\bs{\gamma,\beta})$, we have
    \begin{align}
        \lim_{n\rightarrow \infty} \mathbb{E}_J
        \left[ \expval{H_q/n}{\bs{\gamma,\beta}} \right]
        =V^{(q)}_p(\bs{\gamma,\beta}),
    \end{align}
\end{theorem}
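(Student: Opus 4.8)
The plan is to carry out the $q$-spin generalisation of the moment-method computation of \cite{SKQAOAFarhi2022quantumapproximate}, which also underlies the evaluation algorithms of \cite{Basso_2022,QAOAspinglass}. First I would expand the objective in the computational basis: writing out the $2p$ alternating unitaries in $\ket{\bs{\gamma,\beta}}$ and inserting a resolution of the identity before each layer turns $\expval{H_q}{\bs{\gamma,\beta}}$ into a sum over $2p+1$ intermediate bit strings $\bs{z}^{(1)},\dots,\bs{z}^{(2p+1)}$ (the central one, $\bs{z}^{(p+1)}=\bs{y}$, being the argument of the observable $H_q$) of products of phase factors $e^{-i\sigma_k\gamma_k H_q(\bs{z}^{(k)})}$ from the $e^{-i\gamma_k C}$ layers --- with signs $\sigma_k=\pm1$ according to whether layer $k$ lies on the ket or bra side --- and single-qubit amplitudes $\bra{z_j^{(k)}}e^{-i\beta_k X}\ket{z_j^{(k+1)}}$ from the mixer $e^{-i\beta_k B}=\prod_j e^{-i\beta_k X_j}$, the latter factorising over the $N$ qubits. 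Since $H_q=\sum_{i_1<\dots<i_q}J_{i_1\dots i_q}z_{i_1}\cdots z_{i_q}$ (with the $\sqrt{q!/2N^{q-1}}$ normalisation of the $q$-spin Hamiltonian), linearity together with permutation invariance of the coupling law lets me replace $\expval{H_q}{\bs{\gamma,\beta}}$ by $\binom{N}{q}$ copies of a single representative term $\mathbb{E}_J\big[J_{1\dots q}\,y_1\cdots y_q\cdot(\text{amplitude product})\big]$.

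Second, I would perform the disorder average hyperedge by hyperedge. Because the $\{J_{\bs{e}}\}$ are i.i.d.\ and mean zero, $\mathbb{E}_J$ factorises over hyperedges; an ordinary hyperedge $\bs{e}$ contributes the characteristic function $\prod_k\mathbb{E}\big[e^{-i\sigma_k\gamma_k J_{\bs{e}}\prod_{v\in\bs{e}}z_v^{(k)}}\big]$ of its $2p+1$ layered spin products, which equals $\exp(-\tfrac12 c_{\bs{e}})$ for Gaussian couplings (and a product of cosines for $\pm1$ couplings), with $c_{\bs{e}}=O(N^{-(q-1)})$ because of the normalisation; the distinguished hyperedge carries an extra $J_{1\dots q}$, so its average yields a first-moment factor linear in the layered products on $\{1,\dots,q\}$. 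Taking the logarithm of the product over the $\binom{N}{q}-1$ ordinary hyperedges and using $\log(1+x)=x+O(x^2)$, the leading part collects into $N$ times a function $f_{\bs{\gamma}}$ of the overlap matrix $R_{k\ell}=\tfrac1N\sum_v z_v^{(k)}z_v^{(\ell)}$, since $\sum_{\{v_1<\dots<v_q\}}\prod_a z_{v_a}^{(k)}z_{v_a}^{(\ell)}=\tfrac{1}{q!}(NR_{k\ell})^q+O(N^{q-1})$.

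Third --- where the analytic work concentrates --- one reorganises $\mathbb{E}_J[\expval{H_q/N}{\bs{\gamma,\beta}}]$ as a sum over the empirical distribution of the $(2p{+}1)$-layer spin vectors $(z_v^{(k)})_k$, $v\in[N]$, weighted by $\exp(N f_{\bs{\gamma}}(R))$, by the product of single-qubit mixer amplitudes, and by the first-moment factor on the root vertices; the residual pieces (index coincidences within a hyperedge, the $O(x^2)$ tail of $\log(1+x)$, and the subleading part of the distinguished hyperedge) have to be shown to contribute only lower-order or explicitly computable terms. A Laplace / large-deviation evaluation of the resulting sum then produces a finite limit that no longer depends on $n$, which one \emph{defines} to be $V^{(q)}_p(\bs{\gamma,\beta})$; the explicit combinatorial resummation of \cite{Basso_2022,QAOAspinglass} is what additionally renders this object computable in time $\mathcal{O}(p^2 4^p)$. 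Interchanging $\lim_{n\to\infty}$ with the finitely many layer sums is legitimate since every mixer amplitude and phase has modulus at most one.

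The hard part will be the third step: bounding, uniformly in $N$, the combinatorial corrections from vertex coincidences among the $q$ indices of a hyperedge and across the distinguished hyperedge, and checking that the normalisation $\sqrt{q!/2N^{q-1}}$ built into $H_q$ is exactly the one that keeps the overlap-matrix term of order $N$ in the exponent while every competing contribution stays of lower order --- the $q\ge3$ analogue of the bookkeeping that closes the $q=2$ (SK) computation of \cite{SKQAOAFarhi2022quantumapproximate}, and the same ``retain only the dominant diagram'' principle used elsewhere in this paper to transfer statements between $\mathbb{G}^q_{ER}$ and $\mathbb{R}^q$.
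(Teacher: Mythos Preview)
The paper does not prove this statement at all: it appears in Section~\ref{sec:results} (``Summary of known theorems'') as a result quoted verbatim from \cite{Basso_2022}, with the surrounding text reading ``it was shown in \cite{Basso_2022} that the following theorem holds''. There is therefore no in-paper proof to compare your proposal against.

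That said, your outline is broadly faithful to the argument in the original reference \cite{Basso_2022} (and its $q=2$ predecessor \cite{SKQAOAFarhi2022quantumapproximate}): expand in the computational basis over $2p{+}1$ intermediate strings, average the independent Gaussian couplings hyperedge by hyperedge, and observe that after averaging the summand depends on the $N$ sites only through a finite alphabet of $(2p{+}1)$-bit ``configuration-basis'' vectors. One point of divergence: your third step casts the evaluation as a Laplace/large-deviation problem over empirical distributions, whereas the actual computation in \cite{Basso_2022} is the ``generalized multinomial sum'' alluded to later in this paper (after Corollary~\ref{thm:novel}) --- the $N\to\infty$ limit is taken as a direct algebraic identity on a finite sum indexed by the $4^p$-sized configuration basis, not via a saddle-point. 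This is a reorganisation rather than a different idea, but framing it as Laplace may make you overestimate the analytic difficulty of the step.
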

In \cite{QAOAspinglass}, the authors evaluated the performance of the QAOA for Max-q-XORSAT  on large-girth $(D+1)$-regular graphs. By restricting to graphs that are regular and girth (also known as the shortest Berge-cycle) greater than $2p+1$, the subgraph explored by the QAOA at depth $p$ will appear as regular trees. Since the optimal cut fraction is of the form $1/2 + \mathcal{O}(1/\sqrt{D})$ in a typical random graph as in \cref{eq:asym_xorsat}, we have
\begin{align}
    \label{eq:cut_Fraction}
    \frac{1}{|E|}\expval{ H_{XOR}^q}{\boldsymbol{\gamma},\boldsymbol{\beta}}=\frac{1}{2}+\nu_p^{[q]}(D,\boldsymbol{\gamma},\boldsymbol{\beta})
        \sqrt{\frac{q}{2D}} + \mathcal{O}(1/\sqrt{D}).
\end{align}
Let 
\begin{align}
    \nu_p^{[q]} (\bs{\gamma,\beta}) = \lim_{D\rightarrow \infty }
    \nu_p^{[q]}(D,\boldsymbol{\gamma},\boldsymbol{\beta}),
\end{align}
then, we have the following theorem

\begin{theorem}[Theorem 2 of \cite{QAOAspinglass}]
    \label{thm:CXOR_algorithm}
    For $H_{XOR}^q$ on any $(D+1)$-regular $q$-uniform hypergraphs with girth $>2p+1$, for any choice of $J$, \cref{eq:cut_Fraction} can be evaluated using $\mathcal{O}(p 4^{pq})$ time and $\mathcal{O}(4^{p})$ space.    
    In addition, the infinite $D$ limit can be evaluated with an iteration using $\mathcal{O}(p^2 4^p)$ time and $\mathcal{O}(p^2)$ space.
\end{theorem}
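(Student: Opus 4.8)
The plan is to reduce the global quantity in \cref{eq:cut_Fraction} to a single computation on a finite rooted hypertree, and then to organise that computation as a depth-$p$ dynamic program. Write $H_{XOR}^q=\sum_{e\in E}\tfrac12(1+J_eZ_e)$ with $Z_e=Z_{i_1}\cdots Z_{i_q}$. Because the layers $e^{-i\gamma_kC}$ are products of hyperedge-local phases and $e^{-i\beta_kB}$ of one-qubit rotations, QAOA is a $p$-local algorithm, so each term $\expval{Z_e}{\bs{\gamma,\beta}}$ is supported on the depth-$p$ neighbourhood $B_G(e,p)=\bigcup_{j}B_G(i_j,p)$. The girth condition $>2p+1$ forces every such neighbourhood to be isomorphic to the same object, the depth-$p$ rooted $(D+1)$-regular $q$-uniform hypertree $T$. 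A gauge transformation $z_i\mapsto s_iz_i$ (conjugation by $\prod_jX_j^{(1-s_j)/2}$) fixes $\ket{s}$ and $B$, sends $Z_e\mapsto(\prod_{i\in e}s_i)Z_e$ and each coupling to $(\prod_{i\in e}s_i)J_e$; since $B_G(e,p)$ is a tree we may choose $s$ trivialising all couplings inside it, so that $J_e\expval{Z_e}{\bs{\gamma,\beta}}$ equals a single number $\tau_p(D,\bs{\gamma,\beta})$ depending only on $T$ and $(\bs{\gamma,\beta})$, not on $e$ or $J$. Hence $\frac{1}{|E|}\expval{H_{XOR}^q}{\bs{\gamma,\beta}}=\tfrac12+\tfrac12\tau_p(D,\bs{\gamma,\beta})$, and everything reduces to evaluating $\tau_p$.

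Next I would evaluate $\tau_p$ by inserting a resolution of the identity in the computational basis between each of the $2p+1$ operator slices (the $p$ forward layers, the observable, the $p$ backward layers). This writes $\tau_p$ as a sum over assignments $x^{(v)}\in\{-1,+1\}^{2p}$ to the vertices $v$ of $T$, with the summand factorising into: phase factors $e^{\pm i\gamma_k(\cdots)}$ attached to each hyperedge of $T$ and depending only on the configurations of its $q$ vertices; one-qubit mixer amplitudes $\prod_k\bra{x^{(v)}_k}e^{\mp i\beta_kX}\ket{x^{(v)}_{k\pm1}}$ for each vertex; and the end overlaps $\braket{\pm}{x^{(v)}}$. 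Because $T$ is a tree this sum is computed by belief propagation: a message from a subtree is a function of its root's configuration, i.e.\ a vector of length $4^p$; a leaf message is trivial, and at an internal vertex one multiplies the $D$ incoming hyperedge-messages, each hyperedge-message being obtained by summing over the joint configurations of its $\le q$ incident vertices weighted by that hyperedge's phase, at cost $\mathcal{O}(4^{pq})$. Since the regular-tree symmetry means only one message per level must be formed and the recursion has $\mathcal{O}(p)$ levels, this gives $\mathcal{O}(p\,4^{pq})$ time, while only one message vector need be stored at a time, giving $\mathcal{O}(4^p)$ space --- the first claim.

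For the infinite-$D$ limit I would substitute $\tau_p(D,\bs{\gamma,\beta})=\nu_p^{[q]}(D,\bs{\gamma,\beta})\sqrt{2q/D}+\cdots$ (cf.\ \cref{eq:cut_Fraction}) and expand the tree sum in powers of $1/\sqrt D$. In a $(D+1)$-regular hypertree the effective coupling transmitted across a slice is $\mathcal{O}(1/\sqrt D)$, so as $D\to\infty$ all but the leading contributions vanish and the belief-propagation recursion collapses to a $D$-independent self-consistent iteration: the surviving message data becomes a function of only $\mathcal{O}(p)$ ``time'' indices (pairs of slices), a length-$\mathcal{O}(p)$ array of objects drawn from a space of size $4^p$, updated $p$ times. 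This is the $q$-spin generalisation of the SK recursion of \cite{SKQAOAFarhi2022quantumapproximate}, and it yields $\nu_p^{[q]}(\bs{\gamma,\beta})$ in $\mathcal{O}(p^2\,4^p)$ time and $\mathcal{O}(p^2)$ space, the second claim.

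The routine parts are the locality-plus-gauge reduction of the first paragraph and the mechanical belief-propagation bookkeeping of the second. I expect the main obstacle to be the algebraic content of the last step: pinning down exactly which configurations survive the $D\to\infty$ limit and proving that the exponentially large tree sum collapses to the claimed polynomial-size iteration --- that is, controlling the $1/\sqrt D$ expansion of the hyperedge phase factors and exhibiting the closed recursion that computes $\nu_p^{[q]}$. This is precisely where the ``algebraic techniques'' of \cite{QAOAspinglass} do the work and where a careful proof must be most delicate.
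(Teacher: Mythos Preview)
This theorem is not proved in the present paper: it appears in \cref{sec:results} (``Summary of known theorems'') as a cited result, Theorem~2 of \cite{QAOAspinglass}, and no proof is given here. There is therefore nothing in the paper to compare your proposal against.

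That said, your sketch is a plausible outline of the argument in \cite{QAOAspinglass}: the locality-plus-gauge reduction to a single tree expectation is standard, and the tree dynamic programming for the finite-$D$ bound is the natural approach. One internal inconsistency worth flagging: in the infinite-$D$ paragraph you describe the surviving message as ``a length-$\mathcal{O}(p)$ array of objects drawn from a space of size $4^p$'', which would be $\mathcal{O}(p\,4^p)$ space, not the claimed $\mathcal{O}(p^2)$. To get $\mathcal{O}(p^2)$ space the message must collapse to a $p\times p$ array of \emph{scalars} (overlaps indexed by pairs of time slices), not a $4^p$-dimensional vector; you correctly identify this collapse as the delicate step, but your description of what survives the limit does not yet match the stated bound.
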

Furthermore, the authors also made the following conjecture based on promising numerical evidence,
\begin{conjecture}[Conjecture of \cite{QAOAspinglass}]
    Optimising the QAOA using tree-parameters found in \cite{QAOAspinglass}, the Parisi value for the Sherrington--Kirkpatrick model can be reached:
    \begin{align}
        \lim_{p\rightarrow \infty} \nu_p^{[2]} (\bs{\gamma,\beta}) = \Pi_2.
    \end{align}
    \begin{remark}
        Note that to compute $\nu_{p+1}^{[2]} (\bs{\gamma,\beta})$, one technically first computes $\nu_{p+1}^{[2]} (D,\bs{\gamma,\beta})$ before taking the large $D$ limit $ \lim_{D\rightarrow \infty} \nu_{p+1}^{[2]} (D,\bs{\gamma,\beta})$ so the statement is saying the $\Pi_2 =\lim_{D\rightarrow \infty}\lim_{p\rightarrow\infty} \nu_{p}^{[2]} (D,\bs{\gamma,\beta}) $.
    \end{remark}
\end{conjecture}
While we are unable to prove or disprove the conjecture, we prove later that the generalised Parisi value $\Pi_q$ is not obtainable if the OGP is present.\\

One point to note is that in \cite{QAOA_seegraph}, it has been shown that at low depth $p$, if a problem exhibits the OGP, then the locality of the QAOA makes it such that it is prevented from getting close to the optimal value if it does not see the whole graph. Specifically, the following theorem is proven
\begin{theorem}[modified version of Corollary 4.4 in \cite{chou2022limitations}]
    \label{thm:algo_CLS}
    For Max-$q$-XORSAT on a random Erd\"os--R\'eyi directed multi-hypergraph, for every even $q\ge 4$, there exists a value $\eta_{OGP}< \eta_{OPT}$, where $\eta_{OPT}$ is the energy of the optimal solution, and a sequence $\{\delta(d)\}_{d\ge 1}$ with the following property: for every $\epsilon >0$ there exists sufficiently large $d_0$ such that for every $d>d_0$, every $p\le \delta(d)\log n$ and an arbitrary choice of parameters $\bs{\gamma,\beta}$ with probability converging to 1 as $n\rightarrow \infty$, the performance of the QAOA with depth $p$ satisfies $\expval{C_{XOR}^q/n}{\bs{\gamma,\beta}} \le \eta_{OGP} + \epsilon$.
\end{theorem}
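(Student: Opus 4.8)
The plan is to follow the interpolation-plus-OGP obstruction scheme for local algorithms from \cite{chou2022limitations,Chen_2019}, checking only the two things that change here: that the depth-$p$ QAOA still counts as a (near-)local algorithm when $p$ is allowed to grow like $\delta(d)\log N$, and that the forbidden overlap window of Max-$q$-XORSAT survives for even $q\ge 4$. The structure is a proof by contradiction: assume that for some parameters $(\bs{\gamma},\bs{\beta})$ the value $\expval{C_{XOR}^q/N}{\bs{\gamma,\beta}}$ exceeds $\eta_{OGP}+\epsilon$ with non-vanishing probability, and derive a pair of near-optimal configurations whose normalised overlap lands in the gap.

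First I would invoke the ensemble (coupled) OGP for the diluted model: for even $q\ge 4$ and $d$ large enough there exist $0<\mu_1<\mu_2<1$ and $\eta_{OGP}<\eta_{OPT}$ such that, along any interpolation path between two independent disorder realisations of the sparse Erd\"os--R\'enyi directed multi-hypergraph, one cannot exhibit two configurations that are each within $\eta_{OPT}-\eta_{OGP}$ of optimal and whose normalised overlap lies in $(\mu_1,\mu_2)$. This is exactly the hypothesis that feeds the general obstruction and is established in the cited references for Max-$q$-XORSAT, so I would cite it rather than reprove it. Second, I would verify that QAOA at depth $p$ is a $p$-local algorithm in the sense of Definition 3.1 of \cite{chou2022limitations}: because each layer $e^{-i\gamma_k C}$, $e^{-i\beta_k B}$ has light-cone one, the reduced state on any vertex set $L$ — hence the marginal law of the measured bits on $L$ — depends only on $\bigcup_{v\in L}B_G(v,p)$, and two bits whose $2p$-neighbourhoods are disjoint get independent marginals because the corresponding reduced density matrices factorise over disconnected regions. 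For $p\le\delta(d)\log N$ with $\delta(d)=\Theta(1/\log d)$, a first-moment estimate shows that w.h.p. all but an $o(1)$ fraction of $p$-neighbourhoods are hypertrees and only an $o(N)$ fraction of vertex pairs have intersecting $2p$-neighbourhoods, so the ``almost local'' and ``almost independent'' hypotheses hold with the rates the obstruction needs.

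Third, I would run the overlap-concentration argument. Take two QAOA outputs obtained from two disorder realisations joined by the interpolation parameter $t$, and let $R(t)$ be the normalised overlap of the resulting bitstrings. Writing $R(t)$ as a normalised sum of local, weakly dependent per-vertex contributions and using local independence at distance $2p$ together with the sparse-overlap bound, a bounded-difference (McDiarmid/Azuma) estimate gives $R(t)$ concentrated around $\mathbb{E}[R(t)]$ with fluctuations $o(\mu_2-\mu_1)$, uniformly over the $\Theta(N)$ interpolation steps; meanwhile $\mathbb{E}[R(0)]\approx 0$, $\mathbb{E}[R(1)]\approx 1$, and each step changes $\mathbb{E}[R(t)]$ by $O(1/N)$. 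Hence some $t^\star$ has $\mathbb{E}[R(t^\star)]$, and therefore $R(t^\star)$ with high probability, inside $(\mu_1,\mu_2)$. On the event (of non-vanishing probability, by the contradiction hypothesis and Markov applied to the value) that both QAOA states have value $>\eta_{OGP}+\epsilon$, rounding the measured bitstrings yields two configurations that are each $(\eta_{OPT}-\eta_{OGP})$-optimal with overlap in the forbidden window, contradicting the ensemble OGP. This forces $\expval{C_{XOR}^q/N}{\bs{\gamma,\beta}}\le\eta_{OGP}+\epsilon$ w.h.p.

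The main obstacle is the third step: quantitatively controlling the overlap random variable for a \emph{quantum} local algorithm. Unlike a classical factor-of-i.i.d.\ map, the QAOA output on $L$ is a projective measurement of an entangled state, so ``independence at distance $2p$'' must be extracted from factorisation of reduced density matrices rather than taken for granted, and the concentration of $R(t)$ has to be maintained with fluctuation $o(\mu_2-\mu_1)$ simultaneously across all $\Theta(N)$ interpolation steps while $p$ itself grows like $\log N$. This is precisely where the constraint $p\le\delta(d)\log N$ with $\delta(d)=\Theta(1/\log d)$ is forced, since the count of ``bad'' vertices (non-tree neighbourhoods, or pairs with overlapping $2p$-neighbourhoods) must stay $o(N)$ for the local-independence input to survive; everything else is bookkeeping on top of the cited OGP statement and the light-cone structure of QAOA.
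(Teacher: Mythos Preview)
The paper does not supply its own proof of this statement: it is quoted as a known result, explicitly labelled ``modified version of Corollary 4.4 in \cite{chou2022limitations}'' and introduced with ``the following theorem is proven'' in reference to that work. So there is nothing in the present paper to compare against beyond the citation itself.

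Your sketch is a faithful outline of the argument in \cite{chou2022limitations}: establish the coupled/ensemble OGP for the diluted even-$q$ model, verify that depth-$p$ QAOA is $p$-local via its light-cone structure, then run the interpolation-plus-concentration contradiction on the normalised overlap $R(t)$. The identification of the depth budget $p\le \delta(d)\log N$ with $\delta(d)=\Theta(1/\log d)$ as coming from the tree-neighbourhood and disjoint-$2p$-ball counts is correct, as is your flagging of the quantum concentration step as the genuinely delicate point. One small correction: in the standard interpolation the endpoints are reversed from what you wrote --- at one end the two instances coincide so the two QAOA outputs agree and $R\approx 1$, while at the other end the instances are independent so $R$ sits near its trivial value; your $\mathbb{E}[R(0)]\approx 0$, $\mathbb{E}[R(1)]\approx 1$ should be checked against the convention you fix, but the intermediate-value step is unaffected. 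Otherwise your plan matches the cited proof and would go through.
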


The authors of \cite{QAOAspinglass} noted that assuming the OGP also holds for regular hypergraphs, then a similar argument can be used to show that the QAOA's performance as measured by the algorithm in \cref{thm:CXOR_algorithm} does not converge to the Parisi value $\Pi_q$ for even $q\ge 4$. This is because the large girth assumption implies that the graph has at least $D^p$ vertices so $p$ is always less than $\epsilon \log n$ in this limit. For the Max-$q$-XORSAT, the subgraph explored at constant $p$ has $q[(q-1)^pD^p +\dots +(q-1)D+1]$ vertices.  This lays the foundation of \cref{conjec} later.

\subsection{Equivalence of performance}
The first equivalence between spin glass and MaxCut for the QAOA was shown in \cite{QAOAspinglass}, where the performance of the QAOA at depth $p$ on the SK model as $n\rightarrow \infty$ is equal to the performance of the QAOA at depth $p$ on MaxCut problems on large-girth $(D+1)$-regular graphs when $D\rightarrow \infty$. In the follow up work of \cite{Basso_2022}, they generalize this result to show that the QAOA's performance for the $q$-spin model is equivalent to that for Max-$q$-XORSAT on any large girth $D$-regular hypergraphs in the limit $D\rightarrow \infty$.
\begin{theorem}[Theorem 3 of \cite{Basso_2022}]
    \label{thm:equivalence}
    Let $ \nu_p^{[q]}(\boldsymbol{\gamma},\boldsymbol{\beta})$ be the performance of the QAOA on any instance of Max-$q$-XORSAT that has an underlying $D$-regular, $q$-uniform hypergraph with girth $> 2p+1$ as given in \cite{QAOAspinglass}. Then for any $p$ and any parameters $(\bs{\gamma,\beta})$, we have
    \begin{align}
        \label{eq:equivalence}
        V^{(q)}_p(\bs{\gamma,\beta})=\sqrt{2}\nu_p^{[q]}(\sqrt{q}\boldsymbol{\gamma},\boldsymbol{\beta})
    \end{align}
\end{theorem}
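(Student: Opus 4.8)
The plan is to show that the two sides arise as the \emph{same} Gaussian tree recursion in their respective scaling limits, so that the statement reduces to matching normalisation constants. I would first isolate the common object. On the sparse side, fix a $(D+1)$-regular $q$-uniform hypergraph of girth $>2p+1$; by locality the QAOA amplitude $\langle Z_{i_1}\cdots Z_{i_q}\rangle$ of a single hyperedge depends only on its depth-$p$ neighbourhood, which is a $q$-uniform hypertree, and $\tfrac{1}{|E|}\langle H^q_{XOR}\rangle-\tfrac12$ is the average of $\tfrac12\langle Z_{i_1}\cdots Z_{i_q}\rangle$ over hyperedges. On the mean-field side, \cref{thm:Vqspinglass} lets me write $V^{(q)}_p(\bs\gamma,\bs\beta)$ as the $n\to\infty$ limit of a single normalised coupling term dressed by the $2p$ QAOA unitaries and averaged over the disorder. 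In both cases, expanding each unitary $e^{-i\gamma_k C}$ and $e^{-i\beta_k B}$ in the computational basis turns the amplitude into a sum over bit-string configurations carrying a phase that factorises over hyperedges, the basic building block being a single ``dressed hyperedge amplitude'' depending only on $\gamma_k$ and the incident configurations.

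Next I would analyse the sparse side. Organising the configuration sum on the hypertree from the leaves inward produces a depth-$p$ message-passing recursion in which the message at a vertex absorbs the contributions of its $\approx D$ incident hyperedges. Each such hyperedge contributes a phase of size $O(\gamma_k\sqrt{q/(2D)})$, so the message is a product of $\approx D$ near-identity factors; a cumulant (central-limit) expansion then shows that, as $D\to\infty$, only the second-order term survives, the $\pm1$ disorder may be replaced by a Gaussian of the same variance, and the recursion collapses to a closed Gaussian one for $\nu^{[q]}_p(\bs\gamma,\bs\beta)$. In this limiting recursion $\gamma_k$ enters only in the combination $\sqrt q\,\gamma_k$, because each hyperedge carries $q$ spin factors while the normalisation spreads $\sqrt{q/(2D)}$ over the $D$ hyperedges per vertex; the discarded terms are $O(1/\sqrt D)$ uniformly, matching \cref{eq:cut_Fraction}.

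I would then run the parallel computation for $V^{(q)}_p(\bs\gamma,\bs\beta)$: expand the $2p$ unitaries $e^{-i\gamma_k C}$ with $C=\sum_{j<\dots<q} J_{\cdots}Z\cdots Z$, perform the Gaussian average over the couplings (which pairs up $z$-configurations across the $2p$ layers), and send $n\to\infty$. As in Farhi--Gamarnik--Gutmann, any term whose index-contraction hypergraph contains a cycle is suppressed by a power of $1/n$, so the surviving contributions are exactly the tree-like pairings and $V^{(q)}_p$ is given by a Gaussian tree recursion of the same shape as the one obtained above. Comparing the two recursions term by term and bookkeeping the constants --- the mean-field normalisation $\sqrt{q!/(2N^{q-1})}$ versus the per-hyperedge $\sqrt{q/(2D)}$, together with the $\tfrac12$ multiplying $H^q_{XOR}$ --- yields exactly $V^{(q)}_p(\bs\gamma,\bs\beta)=\sqrt2\,\nu^{[q]}_p(\sqrt q\,\bs\gamma,\bs\beta)$.

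The main obstacle is the $D\to\infty$ collapse: one must show that cumulants beyond second order, summed over the $O(D)$ hyperedges at each of the $O(D^p)$ vertices of the depth-$p$ neighbourhood, contribute $o(1)$, i.e.\ that the central-limit reduction is uniform in the tree depth and in the bounded parameters $(\bs\gamma,\bs\beta)$. A secondary, purely combinatorial difficulty is keeping the two expansions indexed by the same data so that the factors $\sqrt2$ and $\sqrt q$ appear cleanly instead of as unmatched residues; setting up both sides around the common dressed-hyperedge amplitude from the outset should make this routine.
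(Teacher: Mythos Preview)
The paper does not prove this statement: \cref{thm:equivalence} sits in Section~\ref{sec:results}, ``Summary of known theorems'', where it is quoted from \cite{Basso_2022} (and the earlier $q=2$ case from \cite{QAOAspinglass}) without any accompanying argument. There is therefore nothing in the paper to compare your proposal against.

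That said, your sketch is a faithful outline of the approach actually taken in \cite{Basso_2022}: both sides are reduced to the same tree/Gaussian recursion, the sparse side via the large-girth locality plus a central-limit collapse as $D\to\infty$, the dense side via the Gaussian disorder average and the suppression of non-tree index contractions as $n\to\infty$; the identity then follows by matching normalisations. Your identification of the two technical pressure points --- uniform control of higher cumulants across the $O(D^p)$ vertices of the depth-$p$ hypertree, and careful bookkeeping so that the $\sqrt{2}$ and $\sqrt{q}$ appear cleanly --- is accurate, and these are exactly where the work in the original reference lies. If you intend to write this out in full, be aware that the ``routine'' constant-matching is where most mistakes creep in, since the two papers \cite{QAOAspinglass} and \cite{Basso_2022} use slightly different normalisations for $H_q$ and $H^q_{XOR}$ (as the remark following \cref{thm:equivalence} notes).
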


The equivalence of performance of the QAOA on dense and sparse graph is also shown to hold in the case of  Erd\"os--R\'enyi graph. 

\begin{theorem}[modified version of theorem 2 in \cite{Basso_2022}]
    Let
    \begin{align}
        V_p(\mathbb{G},\gamma,\beta) = \lim_{n\rightarrow\infty} \mathbb{E}_{J\sim \mathbb{G}(n)}
        \expval{H_J/n}{\bs{\gamma,\beta}},
    \end{align}
    where $\mathbb{G}$ denotes the underlying graph and $H_J$ the cost function associated with it. Then, for the $q$-spin model $\mathbb{G}_q$ and the  Erd\"os--R\'enyi graph with connectivity $\lambda$, the asymptotic performance of the QAOA on $\mathbb{G}_q$ is the same as $\mathbb{G}^q_{ER}$ for any $(\gamma,\beta)$
    \begin{align}
         V_p^{(q)}(\gamma,\beta) = \lim_{\lambda \rightarrow \infty}  V_p(\mathbb{G}^q_{ER},\gamma,\beta).
    \end{align}
\end{theorem}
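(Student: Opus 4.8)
The plan is to reduce both sides to a computation on bounded-radius neighbourhoods and then match them through a central-limit argument as $\lambda\to\infty$.

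First I would use the fact that depth-$p$ QAOA is a $p$-local algorithm in the sense of Definition 3.1 of \cite{chou2022limitations}: writing $H_q=\sum_{e}J_{e}\prod_{i\in e}Z_i$, the expectation $\expval{H_q/N}{\bs{\gamma,\beta}}$ is, with the normalisation built into the definition of $V_p$, an average over hyperedges $e$ of a single-hyperedge contribution that depends only on the couplings and the hypergraph structure inside the depth-$p$ neighbourhood of $e$. For $\mathbb{G}^q_{ER}(n,\cdot)$ with connectivity $\lambda$, as $n\to\infty$ this neighbourhood converges in the local-weak (Benjamini--Schramm) sense to a Poisson Galton--Watson $q$-uniform hypertree in which every vertex lies in a Poisson$(\lambda)$ number of hyperedges. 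Consequently $V_p(\mathbb{G}^q_{ER},\gamma,\beta)$ equals an explicit hypertree functional $\Phi_p(\lambda,\bs{\gamma,\beta})$ --- precisely the QAOA message-passing iteration of \cite{QAOAspinglass,Basso_2022}, run on this random hypertree with the appropriate $\pm1$ branch weights --- while, by \cref{thm:Vqspinglass}, the target $V^{(q)}_p(\bs{\gamma,\beta})$ is the same iteration driven instead by genuinely Gaussian branch contributions. Thus the theorem reduces to $\lim_{\lambda\to\infty}\Phi_p(\lambda,\bs{\gamma,\beta})=V^{(q)}_p(\bs{\gamma,\beta})$.

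Second, I would take $\lambda\to\infty$ order by order. Because $p$ is fixed, $\Phi_p(\lambda,\bs{\gamma,\beta})$ is analytic in $(\gamma,\beta)$ with Taylor coefficients that are finite sums over decorated subtrees of depth at most $p$; applying $\mathbb{E}_J$ annihilates every decorated subtree carrying an unpaired coupling, and for the surviving (fully paired) diagrams what remains is a combinatorial sum over the Poisson$(\lambda)$ offspring numbers at each internal vertex. With the spin-glass normalisation the effective weight of each incident hyperedge is of order $1/\sqrt{\lambda}$, so at every vertex the QAOA message is a sum of $\Theta(\lambda)$ i.i.d.\ contributions of size $\Theta(1/\sqrt{\lambda})$; as $\lambda\to\infty$ this converges to a Gaussian and the surviving combinatorial sums converge to the matching Gaussian moments. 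Checking that this reproduces the derivation of the $V^{(q)}_p$ formula in \cite{Basso_2022} term by term gives the claim. Equivalently, one may route through \cref{thm:equivalence}: compare $\Phi_p(\lambda,\cdot)$ with the large-girth $D$-regular value $\nu_p^{[q]}(D,\bs{\gamma,\beta})$ at $D=\lambda$, using that a Poisson$(\lambda)$ degree has relative fluctuation $O(1/\sqrt{\lambda})\to0$, and then invoke $V^{(q)}_p(\bs{\gamma,\beta})=\sqrt{2}\,\nu_p^{[q]}(\sqrt{q}\,\bs{\gamma},\bs{\beta})$.

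The hard part will be justifying the interchange of $\lambda\to\infty$ with the expectation over hypertree shapes and with the $\gamma$-series. Since the Poisson offspring law has unbounded support, one must rule out a non-vanishing contribution from atypically high-degree vertices: I would bound the $\sqrt{\lambda}$-rescaled single-hyperedge functional in modulus by a quantity polynomial in the size of the depth-$p$ neighbourhood and control the latter by Poisson tail estimates, so that the family is uniformly integrable and dominated convergence applies at each fixed order. What then remains is bookkeeping: verifying that the combinatorial-to-Gaussian matching is exact at every order, i.e.\ that the QAOA iteration commutes with the central limit, which is essentially the computation already carried out in \cite{Basso_2022} once the hypertree reduction is in place.
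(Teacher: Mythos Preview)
The paper does not contain its own proof of this statement. The theorem is quoted as a ``modified version of theorem 2 in \cite{Basso_2022}'' and is followed only by a remark on notation; it is invoked as an external result from the literature, not established within the paper. Consequently there is no in-paper argument to compare your proposal against.

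That said, your outline is in the right spirit for how such dense-from-sparse equivalences are proved in \cite{Basso_2022}: reduce to a depth-$p$ neighbourhood via locality, use the local tree structure of the sparse model, and then let the $1/\sqrt{\lambda}$-scaled sum of Poisson-many edge contributions converge to the Gaussian that drives the $q$-spin computation. Your alternative route through \cref{thm:equivalence} together with the $O(1/\sqrt{\lambda})$ fluctuation of Poisson degrees is also a natural shortcut. The points you flag as ``hard'' (uniform integrability over random hypertree shapes, and commuting the CLT with the QAOA iteration) are genuine, but they are exactly the technical content handled in \cite{Basso_2022}, so your plan is consistent with the cited source even if the present paper provides nothing to benchmark it against.
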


\section{Main Results}
\label{sec:Conjecture}

We now have the pieces in place to state our main theorem
\begin{theorem}
    \label{thm:algo_limit}
    Given a local algorithm $\mathcal{A}$ that is limited in performance up to depth $p= \epsilon \log n$ on an Erd\"os--R\'enyi hypergraph with sufficiently large average degree $\lambda$, $\mathcal{A}$ is also limited in performance up to depth $p$ on a random $D$-regular hypergraph for sufficiently large $D$.
\end{theorem}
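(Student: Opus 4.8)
The plan is to argue by contradiction, exploiting that a $p$-local algorithm only ever sees $p$-neighbourhoods and that, for $p\le\epsilon\log n$ with $\epsilon$ small relative to $1/\log\lambda$, the $p$-neighbourhoods of both ensembles are hypertrees with high probability. Write $\eta_{OGP}<\eta_{OPT}$ for, respectively, the OGP threshold and the asymptotic optimal density of \cref{eq:asym_xorsat} (all densities normalised as there); by that formula the optima of $\mathbb{G}^q_{ER}$ and of $\mathbb{R}^q(n,D)$ agree up to $\mathcal{O}(1/\sqrt{D})$. By hypothesis, on $\mathbb{G}^q_{ER}$ any run of $\mathcal{A}$ has objective density at most $\eta_{OGP}+o(1)$ with probability tending to $1$. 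Suppose, for contradiction, that the OGP does not limit $\mathcal{A}$ on $\mathbb{R}^q(n,D)$: then there is a constant $c>0$ such that, along a subsequence of $n$, $\mathcal{A}(\mathbb{R}^q(n,D))$ attains objective density at least $\eta_{OGP}+c$ with probability bounded away from $0$. I will transport this success back to the Erd\"os--R\'enyi ensemble and reach a contradiction.

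First I would pin down the local geometry. A first-moment bound on short Berge-cycles shows that in each of $\mathbb{G}^q_{ER}(n,\lambda/n)$ and $\mathbb{R}^q(n,D)$ all but an $o(1)$ fraction of vertices $v$ have $B_G(v,p)$ a genuine hypertree: the expected number of hyperedges through $v$ closing a Berge-cycle of length $\le 2p+1$ is $\mathcal{O}(\lambda^{Cp}/n)$ for a constant $C=C(q)$, which is $o(1)$ once $\epsilon<1/(C\log\lambda)$. On the Erd\"os--R\'enyi side one additionally uses Poisson-tail concentration of the degree sequence to show that, after deleting the hyperedges incident to an $\mathcal{O}(\lambda^{-\log\lambda})$ fraction of ``bad'' vertices, the trimmed hypergraph $\widetilde G$ is $D$-regular --- for a suitable integer $D$ comparable to $\lambda$ --- on all but an $o(1)$ fraction of vertices, whose $p$-neighbourhoods remain hypertrees. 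Because the deletion touches an $o(1)$ fraction of hyperedges, it changes the per-edge value of every assignment, and in particular of the optimum, by only $o(1)$.

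Next I would couple $\widetilde G$ with a sample of $\mathbb{R}^q(n,D)$ drawn from the configuration model so that, with probability $1-o(1)$, the two hypergraphs induce identical $p$-neighbourhoods on a common set $W\subseteq V$ of density $1-o(1)$: on both sides the local law of $B_G(v,p)$ is a truncation of the $D$-regular Galton--Watson hypertree, and the configuration model has vanishing density of short Berge-cycles at fixed $D$. The local-distribution-determination axiom then gives that $(\mathcal{A}(\widetilde G)_v)_{v\in W}$ and $(\mathcal{A}(\mathbb{R}^q(n,D))_v)_{v\in W}$ are equal in law, so the objective density produced by $\mathcal{A}$ on $\widetilde G$ --- hence, by the previous paragraph and the trimming estimate, on $\mathbb{G}^q_{ER}$ --- matches that on $\mathbb{R}^q(n,D)$ up to $o(1)$. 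Combined with the contradiction hypothesis, $\mathcal{A}$ would attain objective density at least $\eta_{OGP}+c-o(1)$ on $\mathbb{G}^q_{ER}$ with probability bounded away from $0$, contradicting the assumption that the OGP confines $\mathcal{A}$ to $\eta_{OGP}+o(1)$ there. This forces the OGP to limit $\mathcal{A}$ at depth $p$ on the regular ensemble as well.

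The step I expect to be the bottleneck is the coupling-and-trimming argument at radius $p=\epsilon\log n$ rather than at constant radius. A $p$-neighbourhood now contains $\sim D^p=n^{\epsilon\log D}$ vertices, so the statements ``most $p$-neighbourhoods are trees'' and ``the coupling succeeds on $B_G(v,p)$'' must each be proved with enough quantitative slack that the failure probabilities remain $o(1)$ after a union bound over the $\sim D^p$ vertices of one neighbourhood; this is exactly where the smallness of $\epsilon$ in terms of $\log\lambda$, the $\mathcal{O}(\lambda^{-\log\lambda})$ trimming bound, and a short-cycle count for the configuration model that is uniform in the growing radius have to be made to fit together. The remaining pieces --- carrying per-edge densities through an $o(1)$ edge deletion, comparing the two optima via \cref{eq:asym_xorsat}, and invoking the $p$-local axioms --- are routine once the coupling is established.
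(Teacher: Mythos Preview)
Your proposal is correct and follows essentially the same route as the paper: argue by contradiction, trim the Erd\"os--R\'enyi hypergraph to a (near-)regular one using Poisson/Chernoff tail bounds that lose only an $\mathcal{O}(\lambda^{-c\log\lambda})$ fraction of edges, show both ensembles are $(p,\,o(1))$-locally treelike for $p\le\epsilon\log n$ with $\epsilon$ small against $1/\log\lambda$, and then transfer the algorithm's output via the $p$-local axioms to contradict \cref{thm:algo_CLS}. The only cosmetic difference is that you phrase the transfer step as an explicit configuration-model coupling, whereas the paper states it more informally as ``both graphs are $p$-locally the same'' after also adding edges back to make the trimmed graph exactly $\lambda'$-regular; either packaging works and your identified bottleneck is exactly where the paper spends its lemmas.
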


We delay the proof for \cref{subsec:proof} and note an immediate consequence for the performance of the QAOA.

\begin{theorem}
    \label{conjec}
    For Max-$q$-XORSAT on a D-regular $q$-uniform hypergraph, for every even $q\ge 4$, there exists a value $\eta_{OGP}$ such that it is smaller than the optimal value $\eta_{OPT}$ with the following property. For every $\epsilon >0$ there exists sufficiently large $d_0$ such that for every $d>d_0$, every $p\le d\log n$ and an arbitrary choice of parameters $\bs{\gamma,\beta}$ with probability converging to 1 as $n\rightarrow \infty$, the performance of the QAOA with depth $p$ satisfies
    \begin{align}
        \frac{\expval{C_{XOR}^q/n}{\bs{\gamma,\beta}}}{|E|} = 
        \frac{1}{2}+
    \nu_{\infty}^{[q]}(D,\boldsymbol{\gamma},\boldsymbol{\beta})
        \sqrt{\frac{q}{2D}} + \mathcal{O}(1/\sqrt{D})<\eta_{OGP} + \epsilon.
    \end{align}
\end{theorem}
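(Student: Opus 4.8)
The plan is to reduce Theorem \ref{conjec} to the combination of Theorem \ref{thm:algo_limit} and the already-established constant-depth limitation results for QAOA on the Erdős–Rényi $q$-spin model. Recall from \cref{sec:results} that for even $q \ge 4$ the OGP is known to obstruct local algorithms — and in particular QAOA — on the random Erdős–Rényi $q$-uniform hypergraph (via the ensemble/coupled-OGP machinery referenced above and \cref{thm:algo_CLS}), and that this obstruction is known to persist not just at constant depth but at depth $p \le \delta(d)\log N$. The first step is therefore to package this as the hypothesis of \cref{thm:algo_limit}: fix $q$ even, $q \ge 4$, pick $\lambda$ large enough that we are well inside the unsatisfiable regime $\lambda > q$ and that the OGP barrier $\eta_{OGP} < \eta_{OPT}$ separates with a uniform gap, and observe that QAOA at depth $p \le \epsilon \log n$ on $\mathbb{G}^q_{ER}(n,\lambda)$ is limited in performance in the precise sense required.

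Second, I would invoke \cref{thm:algo_limit} directly with $\mathcal{A}$ taken to be the QAOA local algorithm (it satisfies the $p$-local algorithm definition of \cite{chou2022limitations}, with the $2p$-neighbourhood light-cone controlling the locality radius). The theorem then transfers the limitation verbatim to a random $D$-regular $q$-uniform hypergraph for $D$ sufficiently large. Here I would be careful to track how the parameters propagate through the trimming argument sketched after Theorem 1.4 in the excerpt: the Erdős–Rényi graph of average degree $\lambda$ is trimmed to a regular hypertree at the cost of an $\mathcal{O}(1/\lambda^{\log\lambda})$ fraction of edges, and then a $\lambda$-regular hypergraph is assembled. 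So the degree $D$ of the target regular hypergraph is essentially $\lambda$ (up to the trimming correction), and the threshold $d_0$ in the statement of \cref{conjec} is whatever $\lambda$ was needed to make both the OGP gap and the trimming loss small enough; the depth regime $p \le \epsilon \log n$ is preserved because the trimming removes only a vanishing edge fraction and does not shrink the girth/locality scale below $\epsilon \log n$ — indeed, as noted in the excerpt, the large-girth regime forces $D^p \lesssim n$, i.e. $p \lesssim \log_D n$, which is consistent with $p \le d \log N$ once one absorbs constants.

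Third, I would translate the conclusion back into the energy-density language of \cref{conjec}: a random $D$-regular instance is, with high probability, $(2p+1)$-locally treelike for the relevant $p$, so the QAOA output at depth $p$ is (with probability $\to 1$) confined below the OGP threshold, giving $\langle C_{XOR}^q/N\rangle_{\bs{\gamma},\bs{\beta}} \le \eta_{OGP} + \epsilon$ uniformly over parameters $\bs{\gamma},\bs{\beta}$. The $\eta_{OGP} < \eta_{OPT}$ separation is inherited from the Erdős–Rényi side via \eqref{eq:asym_xorsat}, since both ensembles share the same leading asymptotic $\tfrac12 + \Pi_q\sqrt{q/2d}$ and the OGP threshold sits strictly below it by a $d$-independent multiplicative factor. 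Finally I would note that \cref{thm:equivalence,thm:Vqspinglass} let one restate this as the failure of the limit-swapped QAOA value $\lim_{p\to\infty}\nu_p^{[q]}(\bs{\gamma},\bs{\beta})$ to reach $\Pi_q$, which is the form in which \cite{Basso_2022} phrased it.

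The main obstacle I anticipate is the second step — verifying that QAOA genuinely satisfies the local-algorithm hypothesis of \cref{thm:algo_limit} \emph{at the logarithmic depth} $p = \epsilon \log n$ rather than merely at constant depth, and that the trimming/contradiction argument is robust to this. At constant $p$ the light-cone argument is routine, but once $p$ grows with $n$ the removed-edge fraction $\mathcal{O}(1/\lambda^{\log\lambda})$, the failure probability of local treelikeness, and the per-vertex error in the interpolation bound all have to be shown to still go to zero simultaneously; controlling their joint behaviour (essentially, that the explored neighbourhood has $o(n)$ vertices and the exceptional set is negligible when $p = \epsilon\log n$ with $\epsilon$ small relative to $1/\log D$) is the delicate point, and it is exactly where the improvement over the $\mathcal{O}(\log\log n)$ bound of \cite{Anshu2023concentrationbounds} is won or lost.
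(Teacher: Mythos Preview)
Your proposal is correct and takes essentially the same approach as the paper: the paper's proof of \cref{conjec} is literally the one-line combination ``follows from \cref{thm:algo_CLS} and \cref{thm:algo_limit}'', which is exactly your first two steps. Your third step and the remarks about \cref{thm:equivalence,thm:Vqspinglass} actually overshoot into the proof of \cref{thm:novel}, and the obstacle you flag (robustness of locality and trimming at depth $\epsilon\log n$) is a concern for the proof of \cref{thm:algo_limit} itself rather than for \cref{conjec}, which simply takes \cref{thm:algo_limit} as a black box.
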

\begin{proof}
    The proof of the inequality follows from \cref{thm:algo_CLS} and \cref{thm:algo_limit}.\\

    The proof the the equality comes from \cref{thm:algo_limit} which states that evaluating $\nu_{p}^{[q]}(D,\boldsymbol{\gamma},\boldsymbol{\beta})$ is valid up to $p=\epsilon \log n$. In the case of MaxCUT or Max-$q$-XORSAT, we have
\begin{align}
    \lim_{n\rightarrow\infty}
    \lim_{p\rightarrow \epsilon \log n}
    \frac{1}{|E|}\expval{ H_{XOR}^q}{\boldsymbol{\gamma},\boldsymbol{\beta}}
    &=\frac{1}{2}+
    \nu_{\infty}^{[q]}(D,\boldsymbol{\gamma},\boldsymbol{\beta})
        \sqrt{\frac{q}{2D}} + \mathcal{O}(1/\sqrt{D})\\
    &<     \frac{1}{2}+
    \Pi_q
        \sqrt{\frac{q}{2D}} + \mathcal{O}(1/\sqrt{D}),
\end{align}
where $\nu_{\infty}^{[q]}(D,\boldsymbol{\gamma},\boldsymbol{\beta})= \lim_{p\rightarrow \infty}\nu_{p}^{[q]}(D,\boldsymbol{\gamma},\boldsymbol{\beta})<\nu_{\infty}^{[q]}(\boldsymbol{\gamma},\boldsymbol{\beta})<\Pi_q$.
\end{proof}

As a result of this theorem, we re-derive and improve a theorem of \cite{Basso_2022}:
\begin{corollary}[Theorem 4 of \cite{Basso_2022}]
    \label{thm:novel}
    From \cref{conjec} the performance of the QAOA on the pure $q$-spin glass for even $q\ge 4$ is upper bounded by $\eta_{OGP}$ as $p\rightarrow \infty$ and is strictly less than the optimal value, i.e.\ the Parisi value $\Pi_q$, under the swapping of limits.
\end{corollary}
We emphasise again that while \cite{Basso_2022} proves this corollary, it does so via explicit calculation of the QAOA using a highly technical proof involving a generalized multinomial sum. This work improves upon it by providing a simpler, straightforward argument using graph properties and extending their result from constant depth to $\epsilon \log n$ depth for the random regular graph. Note that both proofs require \cref{thm:algo_CLS} to prove this corollary.\\

Unfortunately, the QAOA performance on mean-field spin glass has only been shown to be equivalent in performance to the dilute spin glass model via \cref{eq:equivalence} at constant depth $p$ and we are unable to find a proof to extend their equivalence. As such, we are not able to prove that the OGP acts as a barrier on dense graph at logarithmic depth. If one can show that an asymptotic analysis of the mean-field spin glass at constant depth and logarithmic depth result in the same solution as in the random regular graph, this could extend the algorithmic barrier for dense graphs to logarithmic depth as well.\\

We note that \cref{thm:novel} shows that the QAOA will not be able to find the optimal value even when it sees the whole graph and the algorithm runs indefinitely if one optimises the parameters of the QAOA using the tree-parameters obtained via asymptotic analysis. Formally, the Parisi value is attainable via the QAOA with the following limits
\begin{align}
    \lim_{n\rightarrow \infty}\lim_{p\rightarrow \infty} 
    \mathbb{E}_J
    \expval{H_q/n}{\bs{\gamma},\bs{\beta}}=\Pi_q.
\end{align}
 The iteration provided in \cref{thm:CXOR_algorithm} swaps the limits which results in failure of the QAOA to find the optimal value. This leads us to the following corollaries
\begin{corollary}
    \label{cor:spinglass}
    If the OGP exists in a spin glass type problem, then the swapping of limits results in a sub-optimal solution for both random regular graphs and Erd\"os--R\'enyi graphs. In other words, a necessary condition for the validity of limit swapping is that the problem does not exhibit the OGP.
\end{corollary}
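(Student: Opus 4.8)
The plan is to read this off as the contrapositive of the limitation chain already built up. First I would pin down the meaning of ``validity of limit swapping'': writing $V^{(q)}_p(N,\bs{\gamma,\beta})$ for the normalised QAOA value at depth $p$ on a size-$N$ instance, the tree-parameter optimisation of \cite{QAOAspinglass} (see \cref{thm:CXOR_algorithm}) first sends $N\to\infty$ at fixed $p$ and only afterwards sends $p\to\infty$, so the value it targets is $\lim_{p\to\infty}\lim_{N\to\infty}\sup_{\bs{\gamma,\beta}} V^{(q)}_p(N,\bs{\gamma,\beta})$, whereas the genuine optimum is $\lim_{N\to\infty}\lim_{p\to\infty}\sup_{\bs{\gamma,\beta}} V^{(q)}_p(N,\bs{\gamma,\beta}) = \Pi_q$ (a sufficiently deep QAOA reaches the ground state at finite $N$, and the thermodynamic limit then recovers \cref{eq:asym_xorsat}); limit swapping is ``valid'' exactly when these two iterated limits agree. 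This characterisation does not yet mention the OGP.

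Second I would invoke the OGP-limitation results to bound the swapped, algorithmically computed, limit strictly below $\Pi_q$ on both ensembles. On the Erd\"os--R\'enyi side, \cref{thm:algo_CLS} gives $\lim_{N\to\infty}\sup_{\bs{\gamma,\beta}} V^{(q)}_p \le \eta_{OGP}+\epsilon$ for every $\epsilon>0$ and all $p\le\delta(d)\log N$; since QAOA at depth $p$ is a $p$-local algorithm in the sense of \cite{chou2022limitations} and the tree schedule is a fixed parameter family independent of the instance, \cref{thm:algo_limit} transfers the same bound to random $D$-regular hypergraphs for $D$ large, which is \cref{conjec}. Letting $p\to\infty$ afterwards --- legitimate because the $n\to\infty$ limit has already been taken --- yields, on both ensembles, $\lim_{p\to\infty}\lim_{N\to\infty}\sup_{\bs{\gamma,\beta}} V^{(q)}_p \le \eta_{OGP} < \eta_{OPT} = \Pi_q$, i.e.\ \cref{thm:novel}.

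Combining the two steps finishes it: if limit swapping were valid the two iterated limits would coincide, forcing $\eta_{OGP}\ge\Pi_q$ and contradicting $\eta_{OGP}<\Pi_q$; hence the problem cannot exhibit the OGP, which is the stated necessary condition, and conversely the presence of the OGP makes the swapped limit fall short of the optimum on random regular and Erd\"os--R\'enyi graphs alike. I expect the only real care to be the book-keeping around the supremum over $(\bs{\gamma,\beta})$ and the two depth regimes: one must check that the fixed tree schedule is an admissible parameter choice in \cref{thm:algo_limit}, that the depth budget $p\le\epsilon\log n$ used there does not clash with the subsequent $p\to\infty$ limit (it does not, since $n$ is sent to infinity first), and that ``spin glass type problem'' is read as one to which \cref{thm:algo_CLS} applies, e.g.\ Max-$q$-XORSAT with even $q\ge 4$. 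With those conventions in place the corollary is an immediate assembly of \cref{thm:algo_CLS}, \cref{thm:algo_limit} and \cref{thm:novel}.
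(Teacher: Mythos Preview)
Your proposal is correct and matches the paper's approach: the paper does not give a separate proof of \cref{cor:spinglass} at all, simply stating it as an immediate consequence of the preceding discussion (\cref{thm:algo_CLS}, \cref{thm:algo_limit}, and \cref{thm:novel}) together with the observation that the tree-parameter iteration of \cref{thm:CXOR_algorithm} computes the limits in the swapped order. Your write-up is a faithful unpacking of that implicit chain, and the book-keeping caveats you flag (depth regime, supremum over parameters, scope of ``spin glass type'') are exactly the points the paper glosses over.
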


\begin{remark}
    For $q$-spin glass, it is expected that OGP holds for all $q\ge 3$ which suggests that limit swapping is not allowed for all mean-field spin glasses with the possible exception for the $2$-spin glass model (i.e.\ the SK model) \cite{SK_infRSB}.
\end{remark}

In addition, \cref{cor:spinglass} also applies to the Maximum Independent Set problem since \cite{QAOA_seegraph} shows a similar limitation at logarithmic depth. We suspect that similar results extend to all COPs rather than being limited to spin glass type COPs.

\subsection{Proof of \cref{thm:algo_limit}}
\label{subsec:proof}
\begin{proof}

The proof is as follows: first, we need to show that for some $\lambda\in \mathbb{Z}^+$, a $\lambda$-regular $q$-uniform hypertree can be embedded into an Erd\"os--R\'enyi hypergraph of sufficiently high average connectivity. Then, we show that a $\lambda$-regular $q$-uniform hypergraph can be generated from said hypertree. Finally, we show that algorithm $\mathcal{A}$ must also fail to find solutions arbitrarily close to the optimal solution in a $\lambda$-regular $q$-uniform hypergraph as doing so would result in a contradiction.\\

We note that in a COP instance with $m$ chosen edges can be converted into a regular instance changing only $o_{\lambda}(1/\sqrt{\lambda})$\footnote{By $o_{\lambda}(g(\lambda))$, we mean that there is a function $f(n,\lambda)$ such that $\frac{f(n,\lambda)}{g(\lambda)}\rightarrow 0$ as $n\rightarrow \infty$ for fixed $\lambda$, then setting $\lambda\rightarrow \infty$.} edges. This reduction has been used several times before as in \cite{Dembo_2017,sen2017optimization} and recently in \cite{chen2023local}. We follow the proof of \cite{chen2023local} but simplify one of the lemma using $G_{ER}^q(n,p)$ later to avoid the the combinatoric arguments needed when considering $G_{ER}^q(n,m)$, and extend the argument from constant depth $l$ to $l \le \epsilon \log n$.\\

Here, we show that an Erd\"os--R\'enyi hypergraph can be converted into a hypertree by changing only $o(1/\sqrt{\lambda})$ edges. Given $\mathbb{G}^q_{ER}(n,m)$ with average degree $\lambda$, define $\lambda'= \lceil \lambda + \sqrt{\lambda} \log \lambda \rceil$. Let $d_i$ be the degree of vertex $v_i$. Modify the graph as follows:
\begin{enumerate}
    \item Remove edges until $d_i \le \lambda'$ for all vertices.
    \item Add edges to all vertices until $d_i=\lambda'$ where each vertex is chosen with probability proportional to $\lambda'-d_i$.
\end{enumerate}

In order to prove this, we will need the following lemmas.

\begin{lemma}
    In the limit $n\rightarrow \infty$, the number of edges removed  from $\mathbb{G}^q_{ER}(n,m)$ is at most $n\cdot \mathcal{O}_{\lambda}(1/\lambda^{c\log \lambda})$ for some constant $c>0$.
\end{lemma}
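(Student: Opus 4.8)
The plan is to control the number of edges removed in Step 1, i.e.\ the number of hyperedge-slots that must be deleted to bring every vertex degree down to $\lambda' = \lceil \lambda + \sqrt{\lambda}\log\lambda\rceil$. The key observation is that in $\mathbb{G}^q_{ER}(n,m)$ with average degree $\lambda$, each vertex degree $d_i$ is (asymptotically, as $n\to\infty$) a Poisson random variable with mean $\lambda$, and the degrees are essentially independent in the relevant regime. The number of edges we must remove is at most $\sum_i (d_i - \lambda')_+$, so by linearity of expectation the expected number of removed edges is at most $n\cdot \mathbb{E}[(D-\lambda')_+]$ where $D\sim\mathrm{Poisson}(\lambda)$. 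The whole lemma then reduces to a Poisson upper-tail estimate: I need $\mathbb{E}[(D-\lambda')_+] = \mathcal{O}_\lambda(1/\lambda^{c\log\lambda})$.

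The concrete steps, in order. First I would make precise the claim that a vertex degree in $\mathbb{G}^q_{ER}(n,m)$ converges to $\mathrm{Poisson}(\lambda)$ as $n\to\infty$ (this is standard for sparse Erd\H{o}s--R\'enyi hypergraphs; one can also pass to $\mathbb{G}^q_{ER}(n,p)$ with $p$ chosen so the mean degree is $\lambda$, where independence of the indicator variables is exact and the degree is an exact Binomial converging to Poisson — this is exactly the simplification the paper advertises over the $\mathbb{G}^q_{ER}(n,m)$ treatment of \cite{chen2023local}). Second, I would bound the expected number of removed edges by $n\,\mathbb{E}[(D-\lambda')_+]$. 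Third, I would estimate the Poisson tail: using the Chernoff bound for Poisson, $\Pr[D \ge \lambda' ] \le e^{-\lambda h(\lambda'/\lambda)}$ where $h(x) = x\log x - x + 1$, and with $\lambda'/\lambda = 1 + (\log\lambda)/\sqrt\lambda$ a Taylor expansion of $h$ gives $\lambda h(\lambda'/\lambda) \approx \tfrac12 (\log\lambda)^2$ to leading order, i.e.\ $\Pr[D\ge\lambda'] \le e^{-(c/2)(\log\lambda)^2} = \lambda^{-c'\log\lambda}$ for a suitable constant. Summing the tail (or bounding $\mathbb{E}[(D-\lambda')_+]$ by the tail probability times a geometric-type factor, which only changes the constant $c$) gives the stated $\mathcal{O}_\lambda(1/\lambda^{c\log\lambda})$ per vertex, hence $n\cdot\mathcal{O}_\lambda(1/\lambda^{c\log\lambda})$ in total. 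Finally I would note that to upgrade the expectation bound to a high-probability (``in the limit $n\to\infty$'') statement one applies Markov's inequality, or a concentration bound for $\sum_i (d_i-\lambda')_+$ (the summands are independent after the Poissonization, and bounded tails make a Bernstein/Azuma argument routine), which costs nothing in the exponent.

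The main obstacle is getting the exponent in the Poisson tail right, since the whole point of choosing $\lambda' = \lambda + \sqrt{\lambda}\log\lambda$ (rather than, say, $\lambda + C\sqrt\lambda$) is precisely to make the deviation $\log\lambda$ standard deviations out, which is what produces the super-polynomial $\lambda^{-c\log\lambda}$ saving — the rest is bookkeeping. I would be careful that the Taylor remainder in $h(1+\delta) = \tfrac12\delta^2 - \tfrac16\delta^3 + \cdots$ with $\delta = (\log\lambda)/\sqrt\lambda \to 0$ is genuinely lower order, so that $\lambda h(1+\delta) = \tfrac12(\log\lambda)^2(1 + o_\lambda(1))$, which is what licenses absorbing everything into the constant $c$. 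A secondary, more bureaucratic point is justifying the Poisson approximation uniformly enough to survive summation over $n$ vertices; routing everything through $\mathbb{G}^q_{ER}(n,p)$, where the degree is an honest Binomial and the Chernoff bound applies directly without any approximation, sidesteps this cleanly and is the approach I would take.
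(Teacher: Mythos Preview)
Your proposal is correct and follows essentially the same route as the paper: bound the per-vertex excess $(d_i-\lambda')_+$ in expectation via a Chernoff tail bound, observe that the choice $\lambda'=\lambda+\sqrt{\lambda}\log\lambda$ sits $\log\lambda$ standard deviations above the mean so the tail is $e^{-c(\log\lambda)^2}=\lambda^{-c\log\lambda}$, sum over vertices, and then concentrate. The paper works directly with the Binomial degree in $\mathbb{G}^q_{ER}(n,m)$ and the sub-Gaussian Chernoff form $\exp(-(x-\lambda)^2/3\lambda)$ rather than your Poisson $h(x)=x\log x-x+1$, but this is cosmetic.

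The one place where the paper is more explicit than you is the concentration step. Markov alone does \emph{not} suffice here: it only gives $\Pr[\tilde\Delta>2n f(\lambda)]\le 1/2$, which does not tend to zero. The paper instead bounds the second moment $\mathbb{E}[\Delta_i^2]$ by the same integral trick (picking up an extra factor $2(x-\lambda')$ in the integrand, which still integrates to $\mathcal{O}_\lambda(\lambda^{-c\log\lambda})$), and then observes that in $\mathbb{G}^q_{ER}(n,m)$ the degrees are multinomial so $\mathrm{Cov}(\Delta_i,\Delta_j)\le 0$, giving $\mathrm{Var}(\tilde\Delta)\le n\,\mathbb{E}[\Delta_i^2]$ and hence Chebyshev concentration. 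Your alternative of passing to $\mathbb{G}^q_{ER}(n,p)$ to get genuinely independent degrees and then applying Bernstein is equally valid, but you should drop the Markov option.
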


\begin{proof}
    Note that the distribution of edges in an Erd\"os--R\'enyi follows a binomial distribution $B(m,1/n)$. For each vertex $v_i$, the number of edges removed is either 0 or $d_i-\lambda'$ so $\Delta_i:=\max$ ($\lambda'$,0). The first moment of $\Delta_i$ is bounded by
    \begin{align}
        \mathbb{E}[\Delta_i] = \sum_{d>\lambda'}P(\Delta_i=k)(k-\lambda')
        =  \sum_{d>\lambda'}P(\Delta_i\ge k) \le \int_{\lambda'}^{\infty} P (d_i \ge x) dx.
    \end{align}
    Where we used the fact that the expectation value of a random variable equals the cumulative function in the second equality.\\

    The second moment can also be bounded  as 
    \begin{align}
         \mathbb{E}[\Delta_i^2] = \sum_{d>\lambda'}P(\Delta_i=k)(k-\lambda')^2
        \le \int_{\lambda'}^{\infty} P (d_i \ge x) 2(x-\lambda')dx.
    \end{align}
    Using Chernoff's bound for the binomial distribution, we have
    \begin{equation}
        \label{eq:chernoff}
        P (d_i \ge x) \le 2 \exp{-\frac{(x-\lambda')^2}{3\lambda'}}.
    \end{equation}
    Applying \cref{eq:chernoff} to the second moment bound, we have
    \begin{align}
        \int_{\lambda'}^{\infty}2(x-\lambda') \exp{-\frac{(x-\lambda')^2}{3\lambda'}}dx &= 3\lambda' \exp{-\frac{(\lambda'-\lambda)^2}{3\lambda}} \cr
        &= \mathcal{O}_{\lambda}\left(\lambda \exp{-c (\log \lambda )^2}
        \right).
    \end{align}
    for some $c>0$. Thus, both the first and second moment are bounded by $\mathcal{O}_{\lambda} (d^{-c' \log d})$ for some constant $c'>0$.\\

    For the total number of edges removed $\Delta$, we note that by a union bound, $\Delta \le \Tilde{\Delta}$, where $\Tilde{\Delta}:= \sum_i \Delta_i$. Furthermore, we have $\mathbb{E}[\Tilde{\Delta}]=n \mathbb{E}[\Delta_i]$. Unless $\Delta_i$ and $\Delta_j$ share the same edge, they are independent, so 
    \begin{align}
        \mathrm{Var}(\Tilde{\Delta})\le n \mathbb{E}[\Delta_i^2]+ 2 \sum_{i,j} \mathrm{Cov}(\Delta_i,\Delta_j).
    \end{align}
    Since the degree of each vertex is not independent (i.e.\ follows a multinomial distribution), the covariance term is negative. Therefore, as $n\rightarrow \infty$, $\Delta$ is at most $n \mathcal{O}_{\lambda}(d^{-c\log d})$ for some $c>0$ with high probability.
\end{proof}

The process of removing edges does not create cycles (i.e.\ destroy tree-like property). However, we need to ensure that the graph was initially tree-like and remains tree-like after adding edges. Rather than working with $G=\mathbb{G}^q_{ER}(n,m)$, we will work with $G=\mathbb{G}^q_{ER}(n,p)$ for the next lemma and use the fact that a graph containing a $k$-cycle is a monotone increasing property and that for any monotone increasing graph property $\mathcal{P}$, $P(\mathcal{P}\in \mathbb{G}^q_{ER}(n,m))\le C \cdot P(\mathcal{P}\in \mathbb{G}^q_{ER}(n,p))$ for some constant $C$ \cite{frieze}.

\begin{lemma}
    Fix any constant $\lambda$ and $l \le \epsilon \log n$. With high probability as $n\rightarrow \infty$, 1-$o(1)$ fraction of the $l$-local neighbourhood are treelike.
\end{lemma}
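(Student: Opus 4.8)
\subsection*{Proof proposal}

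The plan is a standard breadth-first exploration (local weak convergence) argument; the only nonstandard point is to track the dependence on $l$ carefully enough that the estimate survives up to $l \le \epsilon\log n$. By the monotone-property transfer quoted above, it suffices to work with $G = \mathbb{G}^q_{ER}(n,p)$, where $p$ is calibrated so that the expected degree equals $\lambda$, i.e.\ $p\binom{n-1}{q-1}=\lambda$: the event that $B_G(v,l)$ fails to be a hypertree is monotone increasing in the edge set (adding hyperedges only shrinks distances and enlarges the induced sub-hypergraph on $B_G(v,l)$, so an existing Berge cycle persists), so an upper bound for $\mathbb{G}^q_{ER}(n,p)$ transfers to $\mathbb{G}^q_{ER}(n,m)$ up to the constant $C$ of \cite{frieze}. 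Moreover, since the lemma asks only for a $1-o(1)$ \emph{fraction} of vertices, it is enough to show that for a fixed vertex $v$,
\[
\Pr\!\big[B_G(v,l)\text{ is not a hypertree}\big] = o(1),
\]
because then the expected number of ``bad'' vertices is $o(n)$ and Markov's inequality finishes the job.

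First I would run the BFS from $v$, exposing the hyperedges incident to $v$, then those incident to newly discovered vertices, and so on for $l$ rounds, revealing edges only as needed. A first-moment / stochastic-domination computation shows that the number $Z_i$ of vertices at distance $i$ satisfies $\mathbb{E}[Z_{i+1}\mid\mathcal F_i]\le (q-1)\lambda\, Z_i$ (each frontier vertex lies in $\mathrm{Bin}\!\left(\binom{n-1}{q-1},p\right)$ hyperedges, mean $\lambda$, each contributing $\le q-1$ new vertices), so $\mathbb{E}|B_G(v,l)| \le C\, b^{\,l}$ with $b=(q-1)\lambda$. Fixing a threshold $M = b^{\,l}\omega(n)$ with $\omega(n)\to\infty$ growing slower than any power of $n$ (say $\omega(n)=\log n$), Markov's inequality gives $\Pr[|B_G(v,l)|>M] \le C/\omega(n)=o(1)$.

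Next I would bound the conditional probability of a cycle given $|B_G(v,l)| \le M$. The BFS exposure produces a hypertree if and only if no revealed hyperedge ever contains two already-discovered vertices. When the exploration processes a discovered vertex $u$, the number of potential hyperedges through $u$ that also meet the at most $M$ previously discovered vertices is $\le (q-1)M\binom{n-2}{q-2}$, each present with probability $p\sim\lambda(q-1)!/n^{q-1}$, so the expected number of such ``collision'' hyperedges is $O(M/n)$; a union bound over the $\le M$ processed vertices yields $O(M^2/n)$ for the conditional cycle probability. Altogether,
\[
\Pr\!\big[B_G(v,l)\text{ is not a hypertree}\big] \le \frac{C}{\omega(n)} + O\!\left(\frac{M^2}{n}\right) = \frac{C}{\omega(n)} + O\!\left(\omega(n)^2\, n^{\,2\epsilon\log b - 1}\right),
\]
using $b^{\,l}\le b^{\,\epsilon\log n}=n^{\,\epsilon\log b}$ for $l\le\epsilon\log n$. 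This tends to $0$ provided $\epsilon < \tfrac{1}{2\log b} = \tfrac{1}{2\log((q-1)\lambda)}$, which is exactly the ``sufficiently small $\epsilon$'' regime (equivalently the sequence $\delta(d)$ of \cref{thm:algo_CLS}); note that for larger $\lambda$ the admissible $\epsilon$ shrinks only logarithmically, matching the hypotheses used downstream.

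The main obstacle is really the only place where care is needed: the interplay of the two error terms. One must choose $M$ large enough that $\Pr[|B_G(v,l)|>M]$ is negligible, yet small enough that $M^2/n\to 0$, and this is possible precisely because $l$ is logarithmic with a sufficiently small constant, so $b^{\,l}$ is sub-polynomial times $n^{\,\epsilon\log b}$ rather than a genuine power of $n$ with exponent $\ge 1/2$. A secondary point to handle cleanly is that the offspring mean in the BFS is $(q-1)\lambda$ only up to $(1+o(1))$ corrections (from edges already exposed, and from the gap between $\mathbb{G}^q_{ER}(n,p)$ and a Poisson branching process); these corrections are uniform over the $\le\epsilon\log n$ levels and do not affect the exponent $\epsilon\log b$.
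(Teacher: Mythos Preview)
Your argument is correct, but it follows a genuinely different route from the paper's. You run a BFS exploration from a single vertex with $p$ calibrated so that the expected degree is the constant $\lambda$, bound the branching-process growth by $b^{l}$ with $b=(q-1)\lambda$, and then control the cycle probability by a collision bound $O(M^{2}/n)$; this is the textbook local-weak-convergence argument, carried out with enough care to survive $l\le\epsilon\log n$ for $\epsilon<1/(2\log b)$. The paper instead inflates $p$ to $\Theta(\log n/n^{q-1})$ (degree $\sim\log n$), applies a \emph{global} first-moment bound to argue that the whole graph has no short Berge cycles w.h.p., and then uses a Chernoff degree bound ($\deg\le\log n$) to cap the neighbourhood size at $(\log n)^{\epsilon\log n}$. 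What each buys: your approach keeps $b$ constant, so the neighbourhood-size estimate is tight and the admissible $\epsilon$ comes out transparently as $1/(2\log((q-1)\lambda))$; it is also self-contained and sidesteps the somewhat delicate counting of Berge $k$-cycles in hypergraphs. The paper's approach is conceptually one step shorter once the global cycle count is granted (no cycles anywhere $\Rightarrow$ every neighbourhood is a tree), and its inflated-$p$ trick makes the monotone transfer to $\mathbb{G}^q_{ER}(n,m)$ immediate with room to spare; on the other hand its degree cap $\log n$ is coarser, pushing the admissible $\epsilon$ into a form that depends on $\log\log n$ growth rather than the sharp $1/\log\lambda$ you obtain.
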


\begin{proof}
    Let $p= \frac{c+\log n + \lambda \log \log n}{{{n}\choose {q-1}}} \sim \mathcal{O}(\log{n}/n^{q-1})$ for some constant $c>1$, and $k\in \mathbb{N}$. Let $X$ be the number of $k$-cycles. Leaving $k$ in the big-$\mathcal{O}$ notation to account for $k$ as a function of $n$ later, we have
    \begin{align}
        \mathbb{E}X = {n \choose k} p^k \sim \mathcal{O}\left((q!)^k  \frac{n^{2k}\log^k{n} }{n^{kq}k!}\right).
    \end{align}
    By Markov's inequality, we thus have
    \begin{align}
        P(X>0) \le \mathcal{O}\left( \frac{(q!)^k\log^k{n}}{n^{k(q-2)} k!}\right),
    \end{align}
    which vanishes in the limit $n\rightarrow \infty$ implying that the number of constant $k$-cycles is $o(1)$. The same argument implies that cycles of size $\log n$ and $(\log{n})^{\log n}$ also vanish.\\

    Now we show that the $l$-local neighbourhood of an arbitrary vertex has at most $o(1)$ $k$-cycles. In the limit $n\rightarrow \infty$, the degree of each vertex follows a Poisson distribution  with mean $\lambda$. Let $Y$ denote the degree of an arbitrary vertex. The probability that any vertex $v$ has degree at most $\log n$ is given by
    \begin{align}
        P(Y \le \log n) = 1 - P(Y > \log n) 
    \end{align}
    From the Chernoff bound for the Poisson distribution, we have that
    \begin{align}
        P(Y>x)\le e^{-\lambda} \frac{(e\lambda)^x}{x^x} \sim \mathcal{O}(c^x/x^x)= o(1),
    \end{align}
    for some constant $c$ so all vertices have degree at most $\log n$ with high probability.\\
    
    Thus, the $l$-local neighbourhood has at most $(\log n)^{\epsilon \log n}$ vertices. Repeating the same argument for the number of $k$-cycles shows that only $o(1)$ of the $l$-local neighbourhood will contain a cycle.
\end{proof}

\begin{lemma}
    Fix any $\lambda$, there exists some $\epsilon >0$ such that for $l\le \epsilon \log n$, with high probability as $n\rightarrow \infty$, adding edges preserves trees in 1-$o(1)$ fraction of the $l$-local neighbourhood.
\end{lemma}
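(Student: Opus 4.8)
The plan is to recast the statement as a first-moment count of short Berge cycles in the repaired hypergraph $G := G_0 \cup E_{\mathrm{new}}$, where $G_0$ is the hypergraph left after the removal step 1 and $E_{\mathrm{new}}$ is the set of hyperedges added in step 2, and then to run essentially the same count as in the previous lemma. First I would record what is already available. By the previous lemma, with high probability a $1-o(1)$ fraction of vertices $v$ have $B_{G_0}(v,l)$ treelike, and since $\mathbb{E}\,|B_{G_0}(v,Cl)| \le (\lambda' q)^{Cl}$ with $\lambda' = \lceil \lambda+\sqrt{\lambda}\log\lambda\rceil$ and $C=C(q)$ an absolute constant, a Markov bound shows that, after shrinking $\epsilon$, a $1-o(1)$ fraction of vertices also satisfy $|B_{G_0}(v,Cl)|\le n^{\delta}$ with $\delta=\delta(\lambda,q)<\tfrac{1}{2}$. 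Next I would control $E_{\mathrm{new}}$. Because step 1 deletes only $\mathcal{O}_\lambda(\lambda^{-c\log\lambda})\,n$ hyperedges, the total deficiency obeys $\sum_i(\lambda'-d_i)\le (\lambda'-\lambda)n + o(n)$, so $|E_{\mathrm{new}}| = \tfrac{1}{q}\sum_i(\lambda'-d_i) \le K(\lambda)\,n$; moreover all but an $\mathcal{O}_\lambda(\lambda^{-c\log\lambda})$ fraction of vertices retain a positive deficiency bounded by the constant $\lambda'$, so the deficiency-weighted law from which each added hyperedge is drawn differs from the uniform law on $q$-sets by a bounded factor, and the probability that a fixed $q$-set belongs to $E_{\mathrm{new}}$ is $\mathcal{O}_\lambda(n^{-(q-1)})$ — the same order as an edge marginal of $\mathbb{G}^q_{ER}$.

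With these in hand, call $v$ \emph{bad} if $B_G(v,l)$ is not isomorphic to a hypertree; I want $\mathbb{E}[\,\#\{\text{bad }v\}\,]=o(n)$. If $B_G(v,l)$ is not a tree then its Berge-girth is at most $Cl$, so $v$ lies within distance $l$ of a Berge cycle of length $k\le Cl$; such a cycle cannot be contained in $G_0$ (treelike neighbourhood) and, since removal only destroys cycles, it was not present before step 1 either, so it contains at least one hyperedge of $E_{\mathrm{new}}$. Summing over $k\le Cl$, over the $\le n^{k(q-1)}\mathcal{O}(1)^k$ labelled placements of such a cycle, over which of its $k$ hyperedges lie in $E_{\mathrm{new}}$ and which in $G_0$, and over the $\le n^{\delta}$ vertices within distance $l$ of the cycle, and multiplying by the probability that all $k$ required hyperedges are present — each factor $\mathcal{O}_\lambda(n^{-(q-1)})$ whether old or new — the powers of $n$ cancel inside the cycle sum and one is left with the same geometric series in $k$ as in the previous lemma: $\mathbb{E}[\,\#\{\text{bad }v\}\,] \le n^{\delta}\sum_{k\le Cl}\mathcal{O}_\lambda(1)^k = \mathcal{O}_\lambda\!\big(n^{\delta+C'(\lambda,q)\,\epsilon}\big)$, which is $o(n)$ once $\epsilon$ is small enough in terms of $\lambda$ and $q$ (so that $\delta+C'(\lambda,q)\epsilon<1$). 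This is the restriction on $\epsilon$ in the statement, and it depends only on the fixed parameters; a final Markov step promotes the bound to ``with high probability $\#\{\text{bad }v\}=o(n)$'', which is the claim.

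The only genuinely delicate point — the one step I expect to be the main obstacle — is the dependence between $G_0$ and $E_{\mathrm{new}}$ together with the fact that adding edges enlarges the very neighbourhoods we are trying to keep acyclic. I would handle it in three moves: (a) $G_0$ and $E_{\mathrm{new}}$ are edge-disjoint by construction, so every new short Berge cycle must use an $E_{\mathrm{new}}$-hyperedge, which removes the circularity in ``$B_G(v,l)$ not a tree''; (b) the cycle count is run conditionally on $G_0$, with $E_{\mathrm{new}}$ treated as a configuration-model-type hypergraph on the residual degree sequence, whose hyperedge marginals remain $\mathcal{O}_\lambda(n^{-(q-1)})$ uniformly over the conditioning precisely because $G_0$ is sparse; and (c) the size bound $|B_{G_0}(v,Cl)|\le n^{\delta}$ controls the growth of neighbourhoods — with high probability only $o(n^{\delta})$ added hyperedges meet $B_{G_0}(v,Cl)$, so $B_G(v,l)$ exceeds $B_{G_0}(v,l)$ by at most $o(n^{\delta})$ additional tree-leaves and the leading-order estimate above is unaffected. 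A shorter alternative I would also try is to sidestep the conditional exposure entirely, invoking the classical fact that a configuration-model hypergraph with bounded degrees is locally treelike up to depth $\epsilon\log n$ and then checking that conditioning it to contain the \emph{treelike} graph $G_0$ introduces no short cycles; I would keep whichever version is cleaner.
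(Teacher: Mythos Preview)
Your approach is correct in outline, but it is substantially more elaborate than the paper's argument and misses the one observation that makes the lemma almost immediate. The paper does not re-invoke the previous lemma, does not run a Berge-cycle first-moment count, and does not worry about conditional exposure of $E_{\mathrm{new}}$ given $G_0$. Instead it uses the single fact that after the removal step every vertex has degree at most the constant $\lambda'=\lceil\lambda+\sqrt{\lambda}\log\lambda\rceil$, so $|B_{G_0}(v,l)|\le \lambda'^{\,\epsilon\log n}=n^{\epsilon\log\lambda'}$ \emph{deterministically}. Then, since only $\Theta(n)$ hyperedges are added and each is drawn from a roughly uniform law on $q$-subsets, the probability that a given added hyperedge places two of its vertices inside a fixed such neighbourhood is $\mathcal{O}(n^{\epsilon\log\lambda'}/n)$; choosing $\epsilon<1/\log\lambda'$ forces this (times the $\Theta(n)$ additions, summed over vertices via Markov) to be $o(1)$.

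What the two approaches buy: your cycle-counting route is more systematic and would generalise cleanly to settings where the post-removal degrees are not uniformly bounded, but here it is overkill and introduces the very dependence issue (your steps (a)--(c)) that the paper's argument never has to confront. The paper's route exploits the specific structure of step~1 --- it caps degrees at a constant --- to replace your Markov-based $|B_{G_0}(v,Cl)|\le n^{\delta}$ by a deterministic bound, which in turn collapses the whole analysis to a single birthday-type estimate for each added hyperedge. If you want to streamline your write-up, drop the cycle enumeration and the conditional configuration-model discussion, and instead argue directly: fix $v$, bound $|B_{G_0}(v,l)|$ by $\lambda'^l$, and compute the chance that one of the $\Theta(n)$ new hyperedges has two endpoints in that set.
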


\begin{proof}
    Right after removing edges, every vertex has at most degree $\lambda'$ so given some constant $\lambda$ and $l\le \epsilon \log n$, the $l$-local neighbourhood $B_G(v,l)$ is upper-bound by $\lambda^{\prime \epsilon \log n}$ and is a hypertree. Then, we have to add on average $n(\lambda \log{\lambda})\sim \Theta(n)$ edges but since $B_G(v,l)$ is of order $\mathcal{O}(\lambda^{\prime \epsilon\log n})$, the probability that an added hyperedge contains at least two vertices in $B_G(v,l)$ is $\mathcal{O}(\lambda^{\prime \epsilon \log n}/n)$.\\

    Choose $\epsilon < 1/ \log \lambda'$. As a result, adding clauses results in $o(1)$ fraction of $l$-local neighbourhood forming a cycle.
\end{proof}
Now, we show that a $\lambda$-regular, $q$-uniform hypergraph is locally also a hypertree.
\begin{lemma}
    Fix any $\lambda>1$ and $p\le \epsilon \log n$ for some $\epsilon >0$, with high probability as $n\rightarrow \infty$, $1-o_{\lambda}(1)$ fraction of vertices in the $p$-local neighbourhood are treelike.
\end{lemma}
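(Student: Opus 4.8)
The plan is to generate $G=\mathbb{R}^q(n,\lambda)$ through the configuration model of Bollob\'as: assign each of the $n$ vertices $\lambda$ half-edges (stubs) and reveal the hyperedges by matching the $n\lambda$ stubs into $q$-tuples uniformly at random, exposing them one at a time in breadth-first order starting from a uniformly random root vertex $v$. Track the exploration by the number $t$ of hyperedges revealed so far; the partially revealed neighbourhood remains a hypertree exactly as long as every newly exposed hyperedge is incident to $q-1$ previously unseen vertices, all distinct. So it suffices to bound the probability that such a collision occurs before the exploration reaches radius $p$.

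The key estimate is a one-step collision bound via deferred decisions. Condition on the revealed neighbourhood being a tree after $t$ hyperedges have been exposed; then at most $N_t = O(qt)$ vertices have been touched, hence at most $\lambda N_t$ of the $n\lambda$ stubs are used, and the remaining $\Theta(n\lambda)$ stubs are uniformly available. When we pair a frontier stub into the next hyperedge, the probability that one of the $q-1$ companion stubs falls on an already-touched vertex, or that two of them coincide on a fresh vertex, is $O(qN_t/n)=O(q^2 t/n)$. If the radius-$p$ neighbourhood were a tree it would contain at most $T := T(\lambda,q,p) \le (C_{\lambda,q})^p$ hyperedges, where $C_{\lambda,q}=\Theta(\lambda q)$ is the branching constant of a $\lambda$-regular $q$-uniform hypertree; a union bound over these at most $T$ pairing steps gives
\begin{align}
    P\bigl(B_G(v,p)\ \text{is not a hypertree}\bigr)
    \;\le\; \sum_{t\le T} O\!\left(\frac{q^2 t}{n}\right)
    \;=\; O\!\left(\frac{q^2 T^2}{n}\right).
\end{align}

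Now fix $\epsilon < \tfrac{1}{2\log C_{\lambda,q}}$, so that $T = n^{\,\epsilon \log C_{\lambda,q}} = n^{1/2-\delta}$ for some $\delta = \delta(\lambda,q) > 0$ and hence $q^2 T^2/n = O(n^{-2\delta}) = o(1)$. Since the configuration model is exchangeable in the vertex labels, the expected fraction of vertices whose $p$-local neighbourhood is not treelike equals the probability above, so by Markov's inequality this fraction is $o(1)$ — and in fact $o_\lambda(1)$ once the $\lambda$-dependence of $C_{\lambda,q}$ is carried through — with probability $1-o(1)$ as $n\to\infty$, which is the claim. (If $\mathbb{R}^q(n,\lambda)$ is taken to be the simple hypergraph, one conditions on the configuration model producing no self-loops or repeated hyperedges; that event has probability bounded below by a positive constant depending only on $\lambda$ and $q$, and it is itself a ``no short cycle'' event, so the bound transfers unchanged.)

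The main obstacle is the tension between the exponential growth $(C_{\lambda,q})^p$ of the neighbourhood and the allowed depth $p=\epsilon\log n$: the estimate closes only for $\epsilon$ below the explicit threshold $1/(2\log C_{\lambda,q})$, and one must verify that $C_{\lambda,q}$ bounds the hypertree size uniformly along the exploration, not merely in expectation. The one genuinely delicate point is making the deferred-decisions coupling precise — in particular, justifying that conditioning on ``tree so far'' does not distort the one-step collision probability $O(q^2 t/n)$ — after which the rest is a routine first-moment computation that also mirrors the arguments in the preceding three lemmas.
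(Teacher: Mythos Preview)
Your proof is correct and follows essentially the same route as the paper: explore the $p$-neighbourhood of a root in breadth-first fashion, bound the probability of a collision by a union bound of size $(C_{\lambda,q})^{p}$ over $n$, and then choose $\epsilon$ below the threshold $1/\log C_{\lambda,q}$ (you get $1/(2\log C_{\lambda,q})$ because you sum the step-$t$ bound $O(t/n)$ rather than the paper's level-by-level terms, an immaterial difference). The paper's version is terser and does not name the configuration model or deferred decisions explicitly, but the underlying first-moment argument is the same.
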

\begin{proof}
    As we are interested in the large $n$ limit, we first show that for fixed $p$, the dominant term in the probability that a cycle is form in the large $n$ limit is given by $\frac{(q-1)^p \lambda^p} {n-1-\dots-\lambda^{p-1}(q-1)^{p-1}}$. Consider $p=1$ and choose any hyperedge. Then, the first $(q-1)$ vertices form no cycle with probability 1. The next hyperedge added will form a cycle with probability $\frac{q-1}{n-1}$. This process repeats until we reach the last hyperedge for the root vertex (i.e.\ the $\lambda$ hyperedge) where the probability of forming a cycle is given by $\frac{(\lambda -1)(q-1)}{n-1} $ so the dominant term is of the form $\frac{\lambda (q-1)}{n-1}$. In other words, the term that contributes the highest probability of forming a cycle at depth $p$ is when we are filling up the last hyperedge.\\ 
    
    For $p=2$ and higher, choosing the first hyperedge already has a non-trivial probability of forming a cycle as we might add a vertex at the $p-1$ level. Focusing on $p=2$ this means that adding the first $(q-1)$ vertices has a probability of $\frac{\lambda (q-1)-1}{n-1}$ to form a vertex. If we are in the middle of filling up the second layer (i.e.\ some of the $p=1$ vertices already have degree $\lambda$), then the adding the next hyperedge and vertex would form a cycle with a $p=1$ vertex with $\frac{\lambda (q-1) -c }{n-1-c}$ for some constant $c$ while the probability that it forms a cycle with a $p=2$ vertex is given by $\frac{c*\lambda (q-1)}{n-1-\lambda(q-1)}$. For the very last hyperedge added in $p=2$, the probability of forming a cycle is given by $\frac{(q-1)^2 (\lambda^2-1)}{n-1-\lambda(q-1)}$ which is the dominant term. This process can be iterated to show that the dominant term is of the form $\mathcal{O}(\frac{q^p \lambda^p}{n-\dots -\lambda^{p-1}(q-1)^{p-1}})$ at depth $p$.\\
    
    For any fix $\lambda$ and $p \le \epsilon \log n$ for some constant $\epsilon >0$, the probability that at $p$ distance away from any vertex $v_i$ remains tree-like is given by
    \begin{align}
    \max \left(0, 1-\frac{(q-1)\lambda}{n-1}-\dots - \frac{(q-1)^p \lambda^p}{n-1-\dots -\lambda^{p-1}(q-1)^{p-1}}
    \right),
    \end{align}
    
    since
    \begin{align}
        \mathcal{O}\left( \frac{c^{\epsilon \log n}}{n}\right)
        =
        \mathcal{O}\left( \frac{n^{\epsilon \log c}}{n} \right),
    \end{align}
    choose $\epsilon< \frac{1}{\log c}= \frac{1}{\log (\lambda q)}$ so that $\lim_{n\rightarrow \infty}(n^{\epsilon \log c -1})$ goes to 0 . Then the probability that at $\epsilon \log n$ distance away from any vertex is tree like converges to unity for $n\rightarrow \infty$,
    
    \begin{align}
        \label{eq:rrprob}
        \lim_{n\rightarrow \infty}1-\dots -\frac{(q-1)^p \lambda^p}{n-1-\dots -\lambda^{p-1}(q-1)^{p-1}} = 1.
    \end{align}
    
\end{proof}

Now we can show that the OGP is also an obstruction in random regular hypergraph via contradiction. Assuming that an algorithm $\mathcal{A}$ at logarithmic depth is able to find solutions arbitrarily close to the optimal solution for the Max-$q$-XORSAT on a regular hypergraph. Then this would imply that $\mathcal{A}$ is also able to find such solutions when performed on an Erd\"os--R\'enyi hypergraph since both graphs are $p$-locally the same. However, this contradicts \cref{thm:algo_CLS} and thus, the OGP must also restrict the performance of logarithmic depth local algorithms when applied to a regular hypergraph.
\end{proof}

It is of note that proving that the OGP exists in a problem is much easier when the underlying graph is an Erd\"os--R\'enyi hypergraph as compared to a regular hypergraph since only the former can be described by a probability distribution. This is why there is no proof that the OGP exists for the Max-$q$-XORSAT on regular graph as it requires the Poisson distribution found in an Erd\"os--R\'enyi hypergraph. Given \cref{thm:algo_limit} and that it is possible to show that the OGP exists in both Erd\"os--R\'enyi hypergraph and regular hypergraph in some problems \cite{LLAoversparse}, it is reasonable to think that if the OGP exists in the former, it also exists in the latter. Motivated by this, we make the following conjectures

\begin{conjecture}
    If the overlap gap property exists in a COP with an underlying Erd\"os--R\'enyi hypergraph of sufficiently high connectivity, then it also exists when the underlying hypergraph is a regular hypergraph of sufficiently high degree.
\end{conjecture}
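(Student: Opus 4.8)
The plan is to transport the OGP along essentially the coupling that already underlies \cref{thm:algo_limit}, but now tracking the \emph{global} energy landscape rather than only the $p$-local neighbourhoods. Fix $q$ and a large constant degree $D$, and set $m=nD/q$. Couple $G_{\mathrm{reg}}\sim\mathbb{R}^q(n,D)$ with an Erd\"os--R\'enyi instance $G_{\mathrm{ER}}\sim\mathbb{G}^q_{ER}(n,m)$ on the \emph{same} number of hyperedges so that the two hypergraphs share all but an $f(D)$-fraction of their hyperedges --- this is exactly the degree-equalisation used in \cref{subsec:proof}, for which $f(D)=o_D(1)$ --- and let the disorder $J$ agree on every shared hyperedge, with fresh i.i.d.\ signs on the (equally many) hyperedges private to each side.

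Because the edge counts match, the difference $H^q_{XOR,\mathrm{reg}}(\bs z)-H^q_{XOR,\mathrm{ER}}(\bs z)$ is a centred sum of at most $f(D)m$ independent $\pm1$-type terms, hence sub-Gaussian in $\bs z$; a union bound over the $2^n$ bit-strings gives $\max_{\bs z}|H^q_{XOR,\mathrm{reg}}(\bs z)-H^q_{XOR,\mathrm{ER}}(\bs z)|\le C\sqrt{f(D)\,mn}=o_D(|E|/\sqrt D)$ with high probability, and by \cref{eq:asym_xorsat} the two maxima agree to the same precision. Writing the rescaled energy as $H^q_{XOR}(\bs z)/|E|=\tfrac12+\widehat H(\bs z)/\sqrt{D}$, it follows that every $\epsilon$-optimal solution of $G_{\mathrm{reg}}$ (in $\widehat H$) is an $(\epsilon+o_D(1))$-optimal solution of $G_{\mathrm{ER}}$. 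Invoking the OGP for Max-$q$-XORSAT on $\mathbb{G}^q_{ER}$ --- which is where the Poisson edge structure is used, via a first/second-moment or interpolation argument --- in the quantitatively stable form ``with high probability there is a gap $[0,a(\epsilon)]\cup[b(\epsilon),1]$, with $a,b$ continuous on a range of thresholds,'' we conclude that with high probability under the coupling any two $\epsilon$-optimal solutions of $G_{\mathrm{reg}}$ have rescaled Hamming distance in $[0,a(\epsilon)+o_D(1)]\cup[b(\epsilon)-o_D(1),1]$. Taking $D$ large and then shrinking $\epsilon$ slightly turns this into a genuine gap $[0,a']\cup[b',1]$ for $G_{\mathrm{reg}}$. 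Finally, ``OGP fails'' is a fixed event, so a bound that holds with high probability under the coupling holds with high probability for the $G_{\mathrm{reg}}$-marginal, i.e.\ for $\mathbb{R}^q(n,D)$ itself; running the same argument on a pair of correlated instances transports the ensemble/coupled-OGP of \cite{Chen_2019,chou2022limitations}, which is the form actually needed to bound local algorithms.

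The main obstacle is making the coupling rigorous. Two points need care. First, the degree-equalised object must be shown to be contiguous with (or exactly equal in law to, after the usual configuration-model corrections of \cite{BOLLOBAS1980311}) a \emph{uniform} random regular hypergraph, so that a high-probability statement under the coupling really is a statement about $\mathbb{R}^q(n,D)$; a purely worst-case accounting would instead demand an edit fraction $o(1/\sqrt D)$, which the naive rewiring does not deliver, so the centred sub-Gaussian estimate above is doing essential work. Second, for the ensemble-OGP one must couple an entire interpolating \emph{family} of instances simultaneously while keeping every member within the $o_D(|E|/\sqrt D)$ window, and verify that the interpolation-based OGP proof for $\mathbb{G}^q_{ER}$ is robust to the resulting perturbations --- this is plausible but is the step where a genuinely new estimate, rather than bookkeeping, is most likely required.
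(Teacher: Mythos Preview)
The statement you are attempting to prove is labeled a \emph{conjecture} in the paper; there is no proof to compare against. The paper's \cref{thm:algo_limit} shows only that \emph{algorithmic limitation} transfers from Erd\"os--R\'enyi to regular hypergraphs, by arguing that a $p$-local algorithm cannot distinguish the two ensembles and hence cannot succeed on one while failing on the other. That is a statement about the output of local algorithms, not about the global energy landscape; the paper explicitly separates the two in the remark following the conjecture, and \cref{app:proof} offers only a conditional argument assuming an additional monotonicity hypothesis.

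Your proposal is therefore an attempt at an open problem rather than a reconstruction of an existing proof. The central idea --- couple the two instances so that the Hamiltonians are uniformly close in $\bs z$, then pull near-optimal solutions of the regular instance back to $(\epsilon+o_D(1))$-optimal solutions of the Erd\"os--R\'enyi instance --- is correct as stated, and your sub-Gaussian union-bound computation does give $\max_{\bs z}\bigl|H_{\mathrm{reg}}(\bs z)-H_{\mathrm{ER}}(\bs z)\bigr|=O\bigl(\sqrt{f(D)\,mn}\bigr)=o_D\bigl(|E|/\sqrt D\bigr)$ once $f(D)=o_D(1)$, which is indeed the scale dictated by \cref{eq:asym_xorsat}.

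The obstacle you flag, however, is a genuine gap rather than bookkeeping. The degree-equalisation in \cref{subsec:proof} is engineered so that $(1-o(1))$-fraction of the $p$-local neighbourhoods agree with those of an Erd\"os--R\'enyi instance; nothing in the paper shows, and it is not obvious, that the output has the law of a uniform random regular hypergraph or is even contiguous with it. Without this, a high-probability statement under your coupling says nothing about the $\mathbb{R}^q(n,D)$ marginal. The Dudek--Frieze--Ruci\'nski--\v{S}ileikis embedding in \cref{app:proof} \emph{does} yield a genuine $\mathbb{R}^q(n,d)$ marginal, but only for $d$ growing with $n$ (the appendix takes $d\sim n^{\epsilon}$), so it does not cover the fixed-large-degree regime the conjecture targets. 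Until one exhibits a coupling that simultaneously has the correct regular marginal and an $o_D(1)$ edge-edit fraction, the argument remains a heuristic --- which is precisely why the paper records the statement as a conjecture rather than a theorem.
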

\begin{remark}
    While \cref{thm:algo_limit} seems to support this conjecture, we have only proved this asymptotically and at logarithmic depth. It is not entirely clear if the same result applies when the depth is of $\mathcal{O}(n)$ or for the finite case since the proof of OGP for max-$q$-XORSAT on an Erd\"os--R\'enyi holds for some constant $n\in \mathbb{Z}$.
\end{remark}

\begin{conjecture}[Monotonicity of the OGP]
    For the Max-$q$-XORSAT problem, the overlap gap property is a monotonically increasing graph property.
\end{conjecture}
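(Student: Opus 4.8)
\emph{A possible route towards the monotonicity conjecture.} Formally the claim is that the class $\mathcal{P}_{\mathrm{OGP}}$ of $q$-uniform hypergraphs $G$ for which the Max-$q$-XORSAT instance with i.i.d.\ uniform signs $J$ satisfies the OGP with probability $1-o(1)$ is closed under hyperedge addition. I would split the argument by the number of hyperedges added. Fix $G=(V,E)$, set $G'=G\cup\{e_1,\dots,e_k\}$, and couple the disorders so that $J$ on $E$ is shared while $J_{e_1},\dots,J_{e_k}$ are fresh i.i.d.\ signs; then for every $\bs{z}$
\begin{equation}
    H^{G'}_{J'}(\bs{z}) = H^{G}_{J}(\bs{z}) + \tfrac{1}{2}\sum_{i=1}^{k}\bigl(1 + J_{e_i}\prod_{v\in e_i} z_v\bigr).
\end{equation}
When $k=o(n)$, a Hoeffding bound and a union bound over the $2^{n}$ strings show the added term is $\tfrac{k}{2}+\mathcal{O}(\sqrt{kn})=o(n)$ uniformly in $\bs{z}$ with high probability; since also $\max_{\bs{z}}H^{G'}_{J'}\ge\max_{\bs{z}}H^{G}_{J}$, every $\epsilon'$-near-optimal solution of $G'$ (in the density-normalised sense of the OGP definition) is $\epsilon$-near-optimal for $G$ with $\epsilon=\epsilon'+o(1)$. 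Taking $\epsilon'$ slightly below the OGP threshold of $G$, a pair of near-optimal solutions of $G'$ with overlap in a forbidden window $(\mu_1,\mu_2)$ would then be a forbidden pair for $G$, contradicting OGP on $G$; hence $\mathcal{P}_{\mathrm{OGP}}$ is at least stable under sub-linear edge additions, which in particular covers the trimming step of \cref{subsec:proof}.

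The substantive case is $k=\Theta(n)$, i.e.\ genuinely raising the average degree from $\lambda$ to $\lambda'>\lambda$, where the crude coupling loses an $\Omega(1)$ amount in the near-optimality parameter (the fluctuation $\mathcal{O}(\sqrt{kn})$ is of the same order as the energy scale $n$, while the relevant ground-state differences live on the smaller scale $n/\sqrt{\lambda}$). Here I would interpolate the edge-inclusion probability of $\mathbb{G}^q_{ER}(n,p)$ --- equivalently the Poisson intensity --- continuously from $\lambda$ to $\lambda'$ and run a Guerra--Toninelli-type interpolation, with the aim not merely of comparing ground-state energies (already controlled by \cref{eq:asym_xorsat} and the interpolation results of \cite{sen2017optimization,Dembo_2017}) but of transporting the overlap-concentration statement underlying the OGP --- the coupled/ensemble OGP of \cite{Chen_2019} and the branching OGP of \cite{huang2023algorithmic} --- along the interpolation. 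Concretely, one would want the exponential rate of the expected number of pairs of near-optimal solutions with overlap in the forbidden window --- a truncated second-moment quantity --- to have a derivative of definite (non-positive) sign along the interpolation, the analogue of the definiteness of Guerra's remainder term; for even $q\ge 4$ this is plausible since the relevant Parisi functional is an infimum of convex functionals and the pure-$q$-spin overlap is non-negative in the Ghirlanda--Guerra picture.

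The main obstacle --- and the reason this remains a conjecture --- is controlling the \emph{location} of the forbidden window as the density varies. After the rescaling in \cref{eq:asym_xorsat} the near-optimal overlap law of Max-$q$-XORSAT converges, as $\lambda\to\infty$ following $n\to\infty$, to that of the pure $q$-spin glass, whose OGP window is intrinsic to the Parisi measure and hence $\lambda$-independent; this makes monotonicity very plausible in the large-degree regime and already yields an ``eventually monotone'' statement, from which a sharp density threshold for OGP on $\mathbb{G}^q_{ER}$ would follow. Upgrading this to genuine monotonicity, however, requires interpolation bounds that are uniform in $\lambda$ and valid at every finite $n$ --- recall that the OGP for Max-$q$-XORSAT on $\mathbb{G}^q_{ER}$ is only established for sufficiently large but finite densities --- together with a sign for the interpolation derivative that is controlled without the replica-symmetry-breaking/convexity inputs currently available only for even $q\ge 4$; this is precisely the gap that also leaves the preceding ER-to-regular OGP conjecture open. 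I would therefore expect a complete proof only once the pure-$q$-spin OGP is established for all $q\ge 3$ alongside a quantitatively uniform dense-from-sparse comparison; short of that, \cref{thm:algo_limit} already yields the weaker transfer recorded above.
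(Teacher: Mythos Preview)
The paper does not prove this statement: it is explicitly labelled a \emph{conjecture}, and no proof is offered. The only use the paper makes of it is in the appendix, where an alternate (and shorter) derivation of \cref{thm:algo_limit} is given \emph{assuming} monotonicity, so there is nothing to compare your attempt against.

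That said, your write-up is not really a proof either, and you seem to know it. Your $k=o(n)$ paragraph is essentially correct and is in fact the mechanism the paper uses in its main (unconditional) proof of \cref{thm:algo_limit}: there the Erd\H{o}s--R\'enyi instance is converted to a regular one by deleting and adding only $o(n)$ hyperedges, so the near-optimal sets---and hence the overlap spectra---of the two instances coincide up to an $o(1)$ shift in the optimality parameter. In that sense your ``sub-linear'' case already recovers what the paper actually needs, without invoking the conjecture at all.

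The genuine content of the conjecture is precisely your $k=\Theta(n)$ case, and here your proposal remains a programme, not an argument. The Guerra--Toninelli interpolation you invoke compares \emph{free energies}; upgrading it to transport the coupled/branching OGP---i.e.\ a statement about the near-optimal overlap distribution rather than the ground-state value---would require showing that the truncated second moment of overlap-constrained pairs has a derivative of definite sign along the density interpolation, and you do not supply this. Your ``eventually monotone'' observation (the overlap window stabilises to the pure $q$-spin window as $\lambda\to\infty$) is plausible heuristics but does not yield monotonicity at every finite $\lambda$, which is what the conjecture asserts. So the gap you identify---controlling the forbidden window uniformly in $\lambda$ for all $q\ge 3$---is exactly the gap that keeps this a conjecture, both in your write-up and in the paper.
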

We note that the proof of \cref{thm:algo_limit} is much simpler if the conjectures are true as can be seen in \cref{app:proof}.

\subsection{Numerical evidence}

 \begin{figure}
        \centering
        \includegraphics[width=0.5\linewidth]{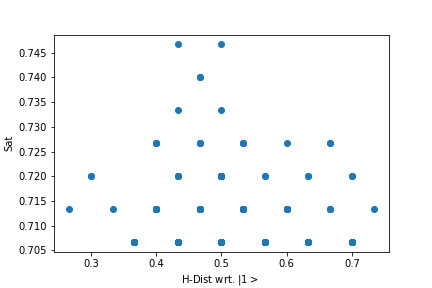}
        \caption{Plot of a typical cluster of solutions. Y-axis indicate fraction of clauses satisfied and X-axis represent the Hamming distance from the state $\ket{1}^{\otimes n}$ (e.g.\ the state $\ket{-1}^{\otimes n/2}\otimes \ket{1}^{\otimes n/2}$ has a value 0.5 on the X-axis.}
        \label{fig:enter-label}
    \end{figure}
    \begin{figure}[h]
        \centering
        \begin{subfigure}{0.475\textwidth}
            \includegraphics[width=\linewidth]{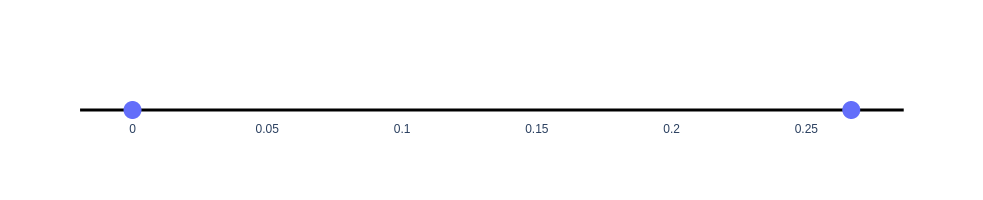}
            \caption{Overlap of optimal solutions.}
            \label{fig:optsol}
        \end{subfigure}
        \hfill
        \begin{subfigure}{0.475\textwidth}
            \includegraphics[width=\linewidth]{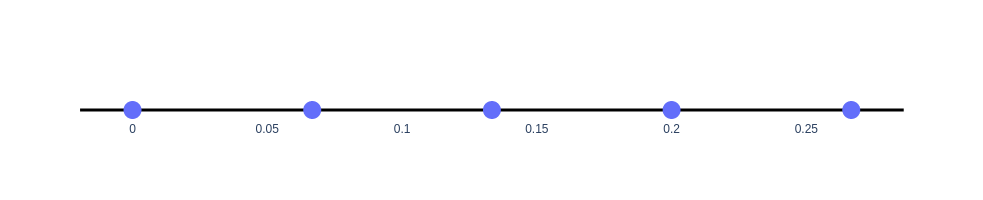}
            \caption{Overlap including first sub-optimal solution.}
            \label{fig:subsol}
        \end{subfigure}
        \vskip\baselineskip
        \begin{subfigure}{0.475\textwidth}
            \includegraphics[width=\linewidth]{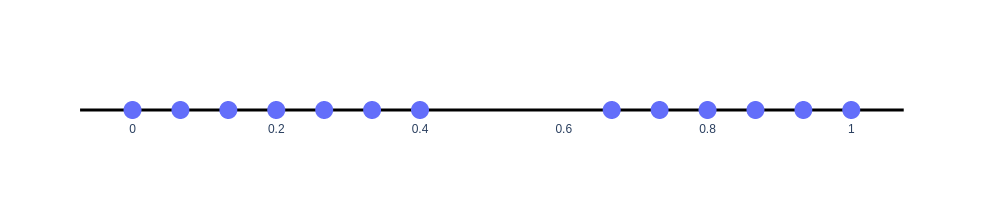}
            \caption{Overlap including first 2 sub-optimal solution.}
            \label{fig:sssol}
        \end{subfigure}
        \hfill
        \begin{subfigure}{0.475\textwidth}
            \includegraphics[width=\linewidth]{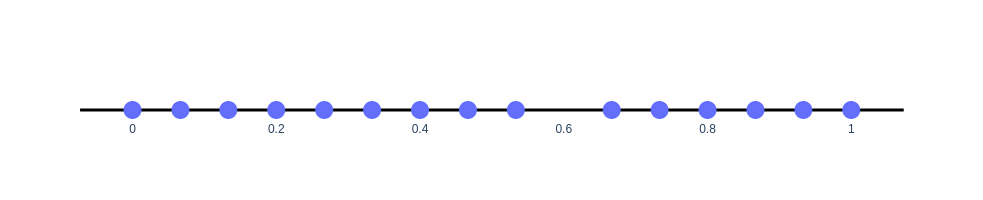}
            \caption{Overlap including first 3 sub-optimal solution.}
            \label{fig:ssssol}
        \end{subfigure}
        \caption{Typical evolution of the overlap spectrum of an $n=30$ Max-3-XORSAT instance with constant degree 15. Here, plot is based on the Hamming distance rather than overlaps. Typically, the optimal solution and the first few optimal solution does not exhibit the OGP. At some distance $\epsilon$ away from the optimal solution, the OGP occurs. Increasing $\epsilon$ further includes additional sub-optimal solution until the set of overlaps is dense.}
    \end{figure}

We refer the reader to numerous numerical studies about how the performance of the QAOA is unable to surpass the OGP barrier. For instance, the authors of \cite{QAOAspinglass} numerically evaluated the performance of the QAOA on Max-3-XORSAT up to $p=14$ and got $0.6623 \Pi_3$. For the $3$-spin glass, the OGP inhibits the AMS algorithm performance to get to $0.987\Pi_3$ \cite{alaoui2020algorithmic}. A study of the QAOA on Max-$q$-XORSAT problem similarly finds that for $n=18$, the QAOA is unable to get close to the 0.987 approximation ratio even at a depth of $p=30$ for $q=3$ \cite{QAOA_XORSAT}.\\

Instead, we provide some numerical evidence that instances of the OGP can occur in random regular hypergraph of odd degree. The code can be found here \cite{OGPCodes}. Our numerical simulation proceeds in the following manner. First, we define the problem size $n$, uniformity $q$, and degree $d$, where we implicitly assume that $nd$ is a multiple of $q$. Then, randomly generate a $d$-regular $q$-uniform hypergraph so that the total number of hyperdeges $|E|=(nd/q)$. Next, we randomly generate the list $\bs{J}=\{-1,+1\}^{|E|}$ for the coupling strength of the hyperedges. Finally, we perform a branch and bound algorithm and record those whose cut-fraction exceeds a certain threshold.\\

Once we have the list of bit-strings and their corresponding cut-fraction, we have to choose some $\epsilon >0$ such that the list of bit-strings that are $\epsilon$-optimal solutions is small. By default, we limit the bit-strings that are at least 95\% to the optimal solution. Finally, compute the overlap between all such $\epsilon$-optimal bit-string and obtain the overlap spectrum. \\

    We find that on average, when $d<q$, the OGP is not present. It is only when $d$ is greater than $q$ that instances of problems exhibiting the OGP first appears. The numerical simulations was run on $q=3$ and varying $n$ up till 30.\\

    We also ran simulations on the SK model as is it highly believed, though not yet proven, that the SK model does not exhibit the OGP \cite{Gamarnik_2022,alaoui2020algorithmic}. We find that indeed the SK model does not exhibit the OGP at $n=45$.

    \begin{figure}[t]
    \centering
    \includegraphics[width=\textwidth]{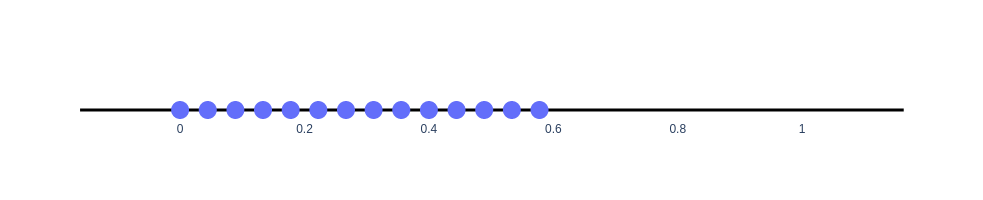}
    \caption{The overlap spectrum of an instance of the SK model with $=45$. No gap was observed and including additonal suboptimal solution merely monotonically increases the overlap until the set of overlap is dense.}
    \label{fig:noOGP}
\end{figure}

\section{Discussion and further work}
Being a heuristic algorithm, the limitations and potential of the QAOA have not yet been fully explored. While swapping the order of limits allows us to evaluate the expectation value with a classical computer faster, it also seems to lead to sub-optimal results. This of course is expected and one can instead use the algorithm developed in \cite{QAOAspinglass} as a heuristic starting ansatz for $(\bs{\gamma,\beta})$ to be further optimized for a specific problem.\\

Currently, the OGP has only been shown to a limitation on dense model at super-constant depth $p\sim \mathcal{O} (\log \log n)$ \cite{Anshu2023concentrationbounds}. Given the ``dense-from-sparse'' reduction performed in \cite{Basso_2022} that showed equivalence between constant depth QAOA for dense and sparse graphs, perhaps one can extend the equivalence to logarithmic depth similar to how we extended the validity of the algorithm from constant to logarithmic depth.\\

We note that this results suggests that at logarithmic depth, the performance of the QAOA equals that of AMS's algorithm for the mean field spin glass \cite{alaoui2020algorithmic}, a type of Approximate Message Passing (AMP) algorithm. This suggests that if the QAOA is optimized correctly beyond logarithmic depth such as polynomial depth, it should outperform such classical algorithms since the QAOA is known to find exact solutions by reduction to the Quantum Adiabatic Algorithm. It is still an open question to determine at what depth $p$ will the QAOA outperform the algorithm. Furthermore, given the similarity in performance to the AMP algorithm, this also suggests that the conjecture in \cite{QAOAspinglass} that the Parisi value for the Sherrington--Kirkpatrick model is obtained under limit swapping might be true as the AMP algorithm achieves the optimal value under the assumption that the OGP does not exists.

\section*{Acknowledgement}
We are thankful to Leo Zhou and Joao Basso for answering questions on their work that this paper is a follow up on, and David Gamarnik for valuable discussions about the overlap gap property. We thank David Gross, Matthias Sperl, and Michael Klatt for feedback on an early draft of this article. We also thank an anonymous referee for suggesting the main proof could be done via modifying \cite{chen2023local} replacing the original proof now in the appendix, and many others for corrections to improve the quality of the paper. This work was funded by DLR under the Quantum Computing Initiative.

	\clearpage

\bibliography{bibliography}

\begin{thebibliography}{10}

\bibitem{OGPsource}
Dimitris Achlioptas, Amin Coja-Oghlan, and Federico Ricci-Tersenghi.
\newblock On the solution-space geometry of random constraint satisfaction
  problems.
\newblock {\em Random Structures \& Algorithms}, 38(3):251--268, 2011.
\newblock URL: \url{https://onlinelibrary.wiley.com/doi/abs/10.1002/rsa.20323},
  \href
  {https://arxiv.org/abs/https://onlinelibrary.wiley.com/doi/pdf/10.1002/rsa.20323}
  {\path{arXiv:https://onlinelibrary.wiley.com/doi/pdf/10.1002/rsa.20323}},
  \href {https://doi.org/10.1002/rsa.20323} {\path{doi:10.1002/rsa.20323}}.

\bibitem{alaoui2020algorithmic}
Ahmed~El Alaoui and Andrea Montanari.
\newblock Algorithmic thresholds in mean field spin glasses, 2020.
\newblock \href {https://arxiv.org/abs/2009.11481} {\path{arXiv:2009.11481}}.

\bibitem{Anshu2023concentrationbounds}
Anurag Anshu and Tony Metger.
\newblock Concentration bounds for quantum states and limitations on the {QAOA}
  from polynomial approximations.
\newblock {\em {Quantum}}, 7:999, May 2023.
\newblock \href {https://doi.org/10.22331/q-2023-05-11-999}
  {\path{doi:10.22331/q-2023-05-11-999}}.

\bibitem{QAOAspinglass}
Joao Basso, Edward Farhi, Kunal Marwaha, Benjamin Villalonga, and Leo Zhou.
\newblock The quantum approximate optimization algorithm at high depth for
  maxcut on large-girth regular graphs and the {S}herrington-{K}irkpatrick
  model.
\newblock Schloss Dagstuhl - Leibniz-Zentrum für Informatik, 2022.
\newblock URL: \url{https://drops.dagstuhl.de/opus/volltexte/2022/16514/},
  \href {https://doi.org/10.4230/LIPICS.TQC.2022.7}
  {\path{doi:10.4230/LIPICS.TQC.2022.7}}.

\bibitem{Basso_2022}
Joao Basso, David Gamarnik, Song Mei, and Leo Zhou.
\newblock Performance and limitations of the {QAOA} at constant levels on large
  sparse hypergraphs and spin glass models.
\newblock In {\em 2022 {IEEE} 63rd Annual Symposium on Foundations of Computer
  Science ({FOCS})}. {IEEE}, oct 2022.
\newblock URL: \url{https://doi.org/10.1109%2Ffocs54457.2022.00039}, \href
  {https://doi.org/10.1109/focs54457.2022.00039}
  {\path{doi:10.1109/focs54457.2022.00039}}.

\bibitem{BOLLOBAS1980311}
Béla Bollobás.
\newblock A probabilistic proof of an asymptotic formula for the number of
  labelled regular graphs.
\newblock {\em European Journal of Combinatorics}, 1(4):311--316, 1980.
\newblock URL:
  \url{https://www.sciencedirect.com/science/article/pii/S0195669880800308},
  \href {https://doi.org/10.1016/S0195-6698(80)80030-8}
  {\path{doi:10.1016/S0195-6698(80)80030-8}}.

\bibitem{chen2023local}
Antares Chen, Neng Huang, and Kunal Marwaha.
\newblock Local algorithms and the failure of log-depth quantum advantage on
  sparse random csps, 2023.
\newblock \href {https://arxiv.org/abs/2310.01563} {\path{arXiv:2310.01563}}.

\bibitem{Chen_2019}
Wei-Kuo Chen, David Gamarnik, Dmitry Panchenko, and Mustazee Rahman.
\newblock Suboptimality of local algorithms for a class of max-cut problems.
\newblock {\em The Annals of Probability}, 47(3), may 2019.
\newblock URL: \url{https://doi.org/10.1214%2F18-aop1291}, \href
  {https://doi.org/10.1214/18-aop1291} {\path{doi:10.1214/18-aop1291}}.

\bibitem{chou2022limitations}
Chi-Ning Chou, Peter~J. Love, Juspreet~Singh Sandhu, and Jonathan Shi.
\newblock {Limitations of Local Quantum Algorithms on Random MAX-k-XOR and
  Beyond}.
\newblock In Miko{\l}aj Boja\'{n}czyk, Emanuela Merelli, and David~P. Woodruff,
  editors, {\em 49th International Colloquium on Automata, Languages, and
  Programming (ICALP 2022)}, volume 229 of {\em Leibniz International
  Proceedings in Informatics (LIPIcs)}, pages 41:1--41:20, Dagstuhl, Germany,
  2022. Schloss Dagstuhl -- Leibniz-Zentrum f{\"u}r Informatik.
\newblock URL:
  \url{https://drops.dagstuhl.de/entities/document/10.4230/LIPIcs.ICALP.2022.41},
  \href {https://doi.org/10.4230/LIPIcs.ICALP.2022.41}
  {\path{doi:10.4230/LIPIcs.ICALP.2022.41}}.

\bibitem{Dembo_2017}
Amir Dembo, Andrea Montanari, and Subhabrata Sen.
\newblock Extremal cuts of sparse random graphs.
\newblock {\em The Annals of Probability}, 45(2), March 2017.
\newblock URL: \url{http://dx.doi.org/10.1214/15-AOP1084}, \href
  {https://doi.org/10.1214/15-aop1084} {\path{doi:10.1214/15-aop1084}}.

\bibitem{tight_Threshold}
Martin Dietzfelbinger, Andreas Goerdt, Michael Mitzenmacher, Andrea Montanari,
  Rasmus Pagh, and Michael Rink.
\newblock Tight thresholds for cuckoo hashing via xorsat.
\newblock In Samson Abramsky, Cyril Gavoille, Claude Kirchner, Friedhelm Meyer
  auf~der Heide, and Paul~G. Spirakis, editors, {\em Automata, Languages and
  Programming}, pages 213--225, Berlin, Heidelberg, 2010. Springer Berlin
  Heidelberg.

\bibitem{Dudek_2017}
Andrzej Dudek, Alan Frieze, Andrzej Ruciński, and Matas Šileikis.
\newblock Embedding the erdős–rényi hypergraph into the random regular
  hypergraph and hamiltonicity.
\newblock {\em Journal of Combinatorial Theory, Series B}, 122:719–740,
  January 2017.
\newblock URL: \url{http://dx.doi.org/10.1016/j.jctb.2016.09.003}, \href
  {https://doi.org/10.1016/j.jctb.2016.09.003}
  {\path{doi:10.1016/j.jctb.2016.09.003}}.

\bibitem{ellis2017regular}
David Ellis and Nathan Linial.
\newblock On regular hypergraphs of high girth.
\newblock {\em Electron. J. Comb.}, 21:1, 2013.
\newblock URL: \url{https://api.semanticscholar.org/CorpusID:961773}.

\bibitem{QAOA_seegraph}
Edward Farhi, David Gamarnik, and Sam Gutmann.
\newblock {The Quantum Approximate Optimization Algorithm Needs to See the
  Whole Graph: A Typical Case}.
\newblock 4 2020.
\newblock \href {https://arxiv.org/abs/2004.09002} {\path{arXiv:2004.09002}}.

\bibitem{QAOA}
Edward Farhi, Jeffrey Goldstone, and Sam Gutmann.
\newblock A quantum approximate optimization algorithm, 2014.
\newblock \href {https://arxiv.org/abs/1411.4028} {\path{arXiv:1411.4028}}.

\bibitem{SKQAOAFarhi2022quantumapproximate}
Edward Farhi, Jeffrey Goldstone, Sam Gutmann, and Leo Zhou.
\newblock The {Q}uantum {A}pproximate {O}ptimization {A}lgorithm and the
  {S}herrington-{K}irkpatrick {M}odel at {I}nfinite {S}ize.
\newblock {\em {Quantum}}, 6:759, July 2022.
\newblock \href {https://doi.org/10.22331/q-2022-07-07-759}
  {\path{doi:10.22331/q-2022-07-07-759}}.

\bibitem{frieze}
Alan Frieze and Michał Karoński.
\newblock {\em Introduction to Random Graphs}.
\newblock Cambridge University Press, 2015.
\newblock \href {https://doi.org/10.1017/CBO9781316339831}
  {\path{doi:10.1017/CBO9781316339831}}.

\bibitem{OGP}
David Gamarnik.
\newblock The overlap gap property: A topological barrier to optimizing over
  random structures.
\newblock {\em Proceedings of the National Academy of Sciences}, 118(41), oct
  2021.
\newblock URL: \url{https://doi.org/10.1073%2Fpnas.2108492118}, \href
  {https://doi.org/10.1073/pnas.2108492118}
  {\path{doi:10.1073/pnas.2108492118}}.

\bibitem{OGPfirst}
David Gamarnik and Quan li.
\newblock Finding a large submatrix of a {G}aussian random matrix.
\newblock {\em Annals of Statistics}, 46, 02 2016.
\newblock \href {https://doi.org/10.1214/17-AOS1628}
  {\path{doi:10.1214/17-AOS1628}}.

\bibitem{Gamarnik_2022}
David Gamarnik, Cristopher Moore, and Lenka Zdeborov{\'{a} }.
\newblock Disordered systems insights on computational hardness.
\newblock {\em Journal of Statistical Mechanics: Theory and Experiment},
  2022(11):114015, nov 2022.
\newblock URL: \url{https://doi.org/10.1088%2F1742-5468%2Fac9cc8}, \href
  {https://doi.org/10.1088/1742-5468/ac9cc8}
  {\path{doi:10.1088/1742-5468/ac9cc8}}.

\bibitem{LLAoversparse}
David Gamarnik and Madhu Sudan.
\newblock {Limits of local algorithms over sparse random graphs}.
\newblock {\em The Annals of Probability}, 45(4):2353 -- 2376, 2017.
\newblock \href {https://doi.org/10.1214/16-AOP1114}
  {\path{doi:10.1214/16-AOP1114}}.

\bibitem{gamarnik2019landscape}
David Gamarnik and Ilias Zadik.
\newblock The landscape of the planted clique problem: Dense subgraphs and the
  overlap gap property, 2019.
\newblock \href {https://arxiv.org/abs/1904.07174} {\path{arXiv:1904.07174}}.

\bibitem{OGPCodes}
Mark Goh.
\newblock The overlap gap property limits limit swapping, 2024.
\newblock URL:
  \url{https://github.com/capselo/The-Overlap-Gap-Property-limits-limit-swapping}.

\bibitem{huang2023algorithmic}
Brice Huang and Mark Sellke.
\newblock Algorithmic threshold for multi-species spherical spin glasses, 2023.
\newblock \href {https://arxiv.org/abs/2303.12172} {\path{arXiv:2303.12172}}.

\bibitem{Karp1972}
Richard~M. Karp.
\newblock {\em Reducibility among Combinatorial Problems}, pages 85--103.
\newblock Springer US, Boston, MA, 1972.
\newblock \href {https://doi.org/10.1007/978-1-4684-2001-2_9}
  {\path{doi:10.1007/978-1-4684-2001-2_9}}.

\bibitem{luczak1990equivalence}
Tomasz Luczak.
\newblock On the equivalence of two basic models of random graph.
\newblock In {\em Proceedings of Random graphs}, volume~87, pages 151--159,
  1990.

\bibitem{OGP_Source2}
M.~Mézard, T.~Mora, and R.~Zecchina.
\newblock Clustering of solutions in the random satisfiability problem.
\newblock {\em Physical Review Letters}, 94(19), May 2005.
\newblock URL: \url{http://dx.doi.org/10.1103/PhysRevLett.94.197205}, \href
  {https://doi.org/10.1103/physrevlett.94.197205}
  {\path{doi:10.1103/physrevlett.94.197205}}.

\bibitem{Panchenko_2014}
Dmitry Panchenko.
\newblock The {P}arisi formula for mixed $p$-spin models.
\newblock {\em The Annals of Probability}, 42(3), May 2014.
\newblock URL: \url{http://dx.doi.org/10.1214/12-AOP800}, \href
  {https://doi.org/10.1214/12-aop800} {\path{doi:10.1214/12-aop800}}.

\bibitem{Parisi_original}
G.~Parisi.
\newblock {A Sequence of Approximated Solutions to the S-$K$ Model for Spin
  Glasses}.
\newblock {\em J. Phys. A}, 13:L115, 1980.
\newblock \href {https://doi.org/10.1088/0305-4470/13/4/009}
  {\path{doi:10.1088/0305-4470/13/4/009}}.

\bibitem{poole2015strength}
Daniel~J. Poole.
\newblock On the strength of connectedness of a random hypergraph.
\newblock {\em Electron. J. Comb.}, 22(1):1, 2015.
\newblock \href {https://doi.org/10.37236/4666} {\path{doi:10.37236/4666}}.

\bibitem{sen2017optimization}
Subhabrata Sen.
\newblock Optimization on sparse random hypergraphs and spin glasses.
\newblock {\em Random Structures \& Algorithms}, 53(3):504--536, 2018.
\newblock URL: \url{https://onlinelibrary.wiley.com/doi/abs/10.1002/rsa.20774},
  \href
  {https://arxiv.org/abs/https://onlinelibrary.wiley.com/doi/pdf/10.1002/rsa.20774}
  {\path{arXiv:https://onlinelibrary.wiley.com/doi/pdf/10.1002/rsa.20774}},
  \href {https://doi.org/10.1002/rsa.20774} {\path{doi:10.1002/rsa.20774}}.

\bibitem{SK_Model}
David Sherrington and Scott Kirkpatrick.
\newblock Solvable model of a spin-glass.
\newblock {\em Phys. Rev. Lett.}, 35:1792--1796, Dec 1975.
\newblock URL: \url{https://link.aps.org/doi/10.1103/PhysRevLett.35.1792},
  \href {https://doi.org/10.1103/PhysRevLett.35.1792}
  {\path{doi:10.1103/PhysRevLett.35.1792}}.

\bibitem{steele1997probability}
J.M. Steele.
\newblock {\em Probability Theory and Combinatorial Optimization}.
\newblock CBMS-NSF Regional Conference Series in Applied Mathematics. Society
  for Industrial and Applied Mathematics, 1997.
\newblock URL: \url{https://books.google.de/books?id=inyEeSEqqtwC}.

\bibitem{Talagrand}
Michel Talagrand.
\newblock The {P}arisi formula.
\newblock {\em Annals of Mathematics}, 163(1):221--263, 2006.
\newblock URL: \url{http://www.jstor.org/stable/20159953}.

\bibitem{fixed-angle}
Jonathan Wurtz and Danylo Lykov.
\newblock Fixed-angle conjectures for the quantum approximate optimization
  algorithm on regular maxcut graphs.
\newblock {\em Phys. Rev. A}, 104:052419, Nov 2021.
\newblock URL: \url{https://link.aps.org/doi/10.1103/PhysRevA.104.052419},
  \href {https://doi.org/10.1103/PhysRevA.104.052419}
  {\path{doi:10.1103/PhysRevA.104.052419}}.

\bibitem{wybo2024missingpuzzlepiecesperformance}
Elisabeth Wybo and Martin Leib.
\newblock Missing puzzle pieces in the performance landscape of the quantum
  approximate optimization algorithm, 2024.
\newblock URL: \url{https://arxiv.org/abs/2406.14618}, \href
  {https://arxiv.org/abs/2406.14618} {\path{arXiv:2406.14618}}.

\bibitem{heuristics_QAOA}
Leo Zhou, Sheng-Tao Wang, Soonwon Choi, Hannes Pichler, and Mikhail~D. Lukin.
\newblock Quantum approximate optimization algorithm: Performance, mechanism,
  and implementation on near-term devices.
\newblock {\em Physical Review X}, 10(2), June 2020.
\newblock URL: \url{http://dx.doi.org/10.1103/PhysRevX.10.021067}, \href
  {https://doi.org/10.1103/physrevx.10.021067}
  {\path{doi:10.1103/physrevx.10.021067}}.

\end{thebibliography}
		\bibliographystyle{IEEEtran}

    \clearpage

\appendix
\section{Alternate proof}
\label{app:proof}
We provide here an alternate proof that relies on a conjecture that seems likely to be true as \cite{gamarnik2019landscape} notes that in the planted clique problem, the occurrence of the OGP is related to the monotonicity of another graph property. For the Max-$q$-XORSAT, it is reasonable to think that the OGP is related to the density of edges in the graph which is clearly monotonically increasing.
\begin{conjecture}
    The OGP of Max-$q$-XORSAT on any graph is a monotonic increasing property in the sense that if the graph $G$ exhibits the overlap gap property, then adding an additional edge does not destroy the graph exhibiting the OGP i.e.\ $G+e$ exhibits the OGP.
\end{conjecture}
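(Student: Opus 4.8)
The plan is to treat the addition of a single hyperedge as a uniformly bounded perturbation of the cost function and then push the gap through by a set-inclusion argument. Write $G+e$ for the hypergraph obtained from $G=(V,E)$ by adjoining a hyperedge $e=(i_1,\dots,i_q)$ with coupling $J_e\in\{-1,+1\}$. Then for every $\bs z\in\{-1,1\}^n$,
\begin{align}
    H_{G+e}(\bs z) = H_G(\bs z) + \tfrac12\bigl(1 + J_e z_{i_1}\cdots z_{i_q}\bigr),
\end{align}
and the added term lies in $\{0,1\}$. Hence $H_G(\bs z)\le H_{G+e}(\bs z)\le H_G(\bs z)+1$ pointwise, and consequently $\max_{\bs z}H_G \le \max_{\bs z}H_{G+e}\le \max_{\bs z}H_G+1$. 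Crucially this estimate is uniform over all configurations and over the choice of $J_e$.

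First I would translate this into an inclusion of near-optimal sets. Using the convention that $\bs z$ is $\epsilon$-optimal for a cost function $H$ if $H(\bs z)\ge \max H-\epsilon n$ (with $|E|$-normalisation one simply replaces $1/n$ by $1/|E|$ below), the two displayed inequalities give: if $\bs z$ is $\epsilon$-optimal for $G+e$, then $H_G(\bs z)\ge H_{G+e}(\bs z)-1\ge \max H_{G+e}-\epsilon n-1\ge \max H_G-(\epsilon+1/n)n$, so $\bs z$ is $(\epsilon+1/n)$-optimal for $G$. Thus the set of $\epsilon$-optimal solutions of $G+e$ is contained in the set of $(\epsilon+1/n)$-optimal solutions of $G$; in particular adding $e$ creates no new near-optimal cluster, and near-optimal solutions of $G+e$ still exist since the global optimum is one.

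The conclusion then follows because the gapped property is inherited by subsets. Suppose $G$ exhibits OGP with parameters $\epsilon_0>0$ and $0\le\mu_1<\mu_2$, i.e. every pair of $\epsilon_0$-optimal solutions of $G$ has overlap outside $(\mu_1,\mu_2)$ (and $\epsilon'$-optimal sets for $\epsilon'\le\epsilon_0$ are nested inside this set, so they inherit the gap). For $n>1/\epsilon_0$ set $\epsilon:=\epsilon_0-1/n>0$. By the previous paragraph the $\epsilon$-optimal set of $G+e$ is contained in the $\epsilon_0$-optimal set of $G$, so every pair of $\epsilon$-optimal solutions of $G+e$ has overlap outside $(\mu_1,\mu_2)$: thus $G+e$ exhibits OGP with the same gap $(\mu_1,\mu_2)$ and slack $\epsilon$. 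Since the perturbation bound is uniform in $J_e$ and respects interpolation between instances, the same argument applies to the ensemble / coupled-OGP used for algorithmic lower bounds — adjoining $e$ enlarges the underlying graph of every instance in the interpolating family while shifting each energy by at most $1$, so the union of near-optimal sets along the $G+e$ family embeds, with slack inflated by $1/n$, in the union along the $G$ family.

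The hard part is the finite-$n$ bookkeeping rather than any structural difficulty: the argument only delivers OGP for $G+e$ with slack $\epsilon_0-1/n$, so the literal ``on any graph'' statement requires the slack parameter to have a little room to spare, which is automatic asymptotically but may fail for a genuinely fixed small graph sitting exactly at the edge of the gapped regime. Iterating the estimate shows the property survives the addition of any $o(n)$ hyperedges (the slack eroding by $k/n$ after $k$ additions), which is exactly the form needed to dovetail with the trim-and-complete reduction of \cref{subsec:proof}; making the statement robust under the $\Theta(n)$-edge additions that turn a sparse instance into a dense one, or obtaining it with no erosion at all, would require controlling the \emph{location} of $\mu_1,\mu_2$ under perturbation rather than just the slack, and that is why the statement is left as a conjecture.
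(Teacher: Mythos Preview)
The paper does not prove this statement; it is explicitly stated as a \emph{conjecture} and is used only hypothetically in the alternate proof of \cref{app:proof}. So there is no ``paper's own proof'' to compare against, and your proposal should be read as an attempt to settle (or partially settle) an open item.

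Your perturbation argument is sound for what it actually establishes. The pointwise bound $H_G(\bs z)\le H_{G+e}(\bs z)\le H_G(\bs z)+1$ is correct, the resulting inclusion of near-optimal sets is correct, and the inheritance of the overlap gap by subsets is correct. What you have proved is: if $G$ has OGP with slack $\epsilon_0$ (in the normalised sense $H(\bs z)\ge\max H-\epsilon_0 n$) and $n>1/\epsilon_0$, then $G+e$ has OGP with the same gap interval and slack $\epsilon_0-1/n$. You are also right that this iterates to $o(n)$ edge additions and that this weaker statement already suffices to dovetail with the trimming argument of \cref{subsec:proof}.

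You are equally right to stop short of claiming the conjecture as stated. Two points prevent your argument from closing it. First, the conjecture says ``on any graph'' with no asymptotic qualifier, whereas your slack erosion forces $n>1/\epsilon_0$; on a fixed finite instance sitting exactly at the threshold the argument gives nothing. Second, and more importantly for the paper's intended application in \cref{app:proof}, the embedding $\mathbb{G}^q_{ER}(n,m)\subset\mathbb{R}^q(n,d)$ involves adding $\Theta(n)$ edges (since $m=(1-\gamma)nd/q$ with $\gamma$ bounded away from $0$), and after $\Theta(n)$ iterations your slack has eroded to zero. So while you have proved a useful asymptotic monotonicity lemma, the strict form of the conjecture --- and in particular the $\Theta(n)$-robust version the appendix actually needs --- remains open, exactly as you diagnose in your final paragraph.
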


    Another result that we need comes from the fact that we can embed an Erd\"os--R\'enyi hypergraph into a random regular hypergraph.
    \begin{theorem} [Theorem 1 and Corollary 2 of \cite{Dudek_2017}]
        For each $q\ge2$ there is a positive constant $C$ such that if for some real $\gamma=\gamma(n)$ and positive integer $d=d(n)$,
        \begin{align}
        \label{eq:gammacont}
        C\left(
        \left(d/n^{q-1}+\log(n)/d \right)^{1/3}
        +1/n
        \right)
        \le \gamma
        <1,
        \end{align}
        and $m=(1-\gamma)nd/q$ is an integer, then there is a joint distribution of $\mathbb{G}^{q}_{ER}(n,m)$ and $\mathbb{G}^{q}(n,d)$ with
        \begin{align}
            \lim_{n\rightarrow \infty} \mathbb{P}(\mathbb{G}^{q}_{ER}(n,m) \subset
        \mathbb{R}^{q}(n,d)) =1.
        \end{align}
        Furthermore, let $\mathcal{P}$ be a monotone increasing property. if $\log{n}\ll d \ll n^{q-1}$, for some $m\le (1-\gamma)nd/q$ where $\gamma$ satisfies \cref{eq:gammacont}, then if $\mathbb{G}^{q}_{ER}(n,m)\in \mathcal{P}$ as $n\rightarrow \infty$, then $\mathbb{R}^{q}(n,d)) \in \mathcal{P}$ as $n\rightarrow \infty$.
    \end{theorem}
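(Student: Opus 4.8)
The plan is to obtain the embedding $\mathbb{G}^{q}_{ER}(n,m)\subset \mathbb{R}^{q}(n,d)$ by constructing an explicit coupling of the two distributions, and then to read off the monotone-property transfer as an immediate consequence. First I would represent $\mathbb{R}^{q}(n,d)$ through Bollob\'as's configuration (pairing) model: assign $d$ half-edges to each of the $n$ vertices, take a uniformly random partition of the $nd$ half-edges into groups of size $q$, and project onto a hypergraph on $[n]$; conditioning on simplicity is harmless since $\log n \ll d \ll n^{q-1}$ keeps the probability of simplicity bounded away from $0$. The goal is then to show that, as a random element of the subgraph order, this object stochastically dominates $\mathbb{G}^{q}_{ER}(n,m)$ with $m=(1-\gamma)nd/q$.

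Second --- the core of the argument --- I would run the coupling in the ``build the sparse graph first, then complete it to a regular one'' direction. Starting from $\mathbb{G}^{q}_{ER}(n,m)$, the expected degree of each vertex is $(1-\gamma)d$, so a Chernoff bound shows every degree is at most $d$ with high probability as soon as the typical fluctuation of order $\sqrt{d\log n}$ is dominated by the slack $\gamma d$; this is precisely where the term $(\log n/d)^{1/3}$ in \cref{eq:gammacont} enters. One then adds, for each vertex, its ``deficiency'' of $d-(\text{current degree})$ half-edges and pairs all of them configuration-model style to complete the instance to a $d$-regular $q$-uniform hypergraph. To argue that the completed hypergraph is asymptotically uniform over \emph{all} $d$-regular hypergraphs containing the initial ER instance, I would invoke switching estimates in the style of McKay--Wormald: the number of valid completions is sharply concentrated, so every admissible $d$-regular hypergraph appears with nearly equal weight, and the sparsity condition $d \ll n^{q-1}$ --- the source of the $(d/n^{q-1})^{1/3}$ term --- makes the contribution of ``bad'' pairings (repeated or degenerate hyperedges) negligible, so that the relevant switching ratios are all $1+o(1)$.

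Third, for the ``furthermore'' clause: if $\mathcal{P}$ is a monotone increasing property and $\mathbb{G}^{q}_{ER}(n,m)\in\mathcal{P}$ with high probability, then on the high-probability event that the coupling succeeds we have $\mathbb{G}^{q}_{ER}(n,m)\subset \mathbb{R}^{q}(n,d)$, and monotonicity immediately yields $\mathbb{R}^{q}(n,d)\in\mathcal{P}$; a union bound over the two high-probability events completes the proof. The main obstacle I anticipate is the enumeration/switching step: one must control the ratio of completion counts across different target regular hypergraphs uniformly and with error beating the $o(1)$ threshold, while keeping $d$ in the widest possible window $\log n \ll d \ll n^{q-1}$. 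The exponent $1/3$ in the hypothesis is essentially the cost of balancing the two competing error sources --- poor degree concentration when the graph is too sparse, and pairing collisions when it is too dense --- and simultaneously taming both is the delicate part of the argument.
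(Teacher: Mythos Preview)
The paper does not prove this theorem. It is quoted verbatim as an external result --- Theorem~1 and Corollary~2 of \cite{Dudek_2017} --- and is used as a black box in the appendix's alternate proof. The text that follows the theorem in the paper is not a proof of the embedding statement; it is a verification that the girth and degree constraints arising in the paper's application satisfy the hypothesis \cref{eq:gammacont}, so that the cited embedding may be invoked. There is therefore no ``paper's own proof'' against which to compare your proposal.

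That said, your sketch is a reasonable high-level outline of how the original Dudek--Frieze--Ruci\'nski--\v{S}ileikis argument proceeds: one does couple by first drawing the Erd\H{o}s--R\'enyi hypergraph, checking via a Chernoff bound that no vertex degree exceeds $d$ (this gives the $\log n/d$ contribution), and then completing to a $d$-regular hypergraph, with switching-type estimates controlling the near-uniformity of the completion (this gives the $d/n^{q-1}$ contribution). The ``furthermore'' clause is indeed immediate from the coupling and monotonicity. The step you flag as the obstacle --- uniform control of completion-count ratios across the full degree window --- is exactly the technical heart of the cited paper, and your sketch does not attempt it; but since the present paper takes the result on faith, nothing more is required here.
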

    For finite $n$ and $d$, the proof trivially follows from the assumption of monotonicity. In the case for infinite $n$ and $d$, we have to prove that our condition for large potentially infinite girth fit this condition for the embedding.\\

    The maximum girth $g$ of a $d$-regular $q$-uniform hypergraph is known \cite{ellis2017regular} to be bounded by
    \begin{align}
      \frac{\log{n} -\log 4}{\log (q-1)+\log (d-1)}-1<  g \le \frac{2\log n}{\log (q-1)+\log (d-1)}+2.
    \end{align}
    For constant $q$, an infinite girth requires $d \ll n$\\
    
    Let $d\sim \mathcal{O}(n^{\epsilon})$ for sufficiently small $\epsilon>0$ and $n^{\epsilon}>q$. Such a constraint satisfies the large girth requirement. Substituting $d$ into \cref{eq:gammacont} gives us
    \begin{align}
        C\left(
        \left(\frac{n^{\epsilon}}{n^{q-1}}+\frac{\log{n}}{n^{\epsilon}}\right)^{1/3}
        + 1/n
        \right) \le \gamma <1,
    \end{align}
    where in the large $n$ limit, we see that the lhs.\ approaches 0. Thus, for $m=(1-\gamma)nd/q \sim \mathcal{O}(n^{1+\epsilon})$, an embedding can be performed.\\
    
    From this, it follows that for some $m^*\le (1-\gamma)nd/q$, if $\mathbb{G}^{q}_{ER}(n,m) $ has a monotone increasing property $\mathcal{P}$, then $\mathbb{R}^{q}(n,d)$ also has it as well.
    For $m$ to be drawn from a Poisson distribution where the graph has connectivity $\lambda$, we require $m=n (\log n + (\lambda-1)\log \log n +c)/d\sim \mathcal{O}(n^{1-\epsilon}\log n)$ for some finite constant $c$ \cite{poole2015strength}. Thus, there exists a $d_0$ such that, for $d>d_0$, the existence of the OGP is present in the solution space of Max-$q$-XORSAT on Random Regular hypergraph  $\mathbb{R}^{q}(n,d)$ with high probability meaning that the result of \cite{chou2022limitations} also applies to random regular hypergraphs.

\end{document}